\documentclass[11pt]{article}

\usepackage{amssymb}
\setcounter{tocdepth}{3}
\usepackage{graphicx}

 \usepackage{fullpage}

\usepackage{hyperref, subfigure}

\usepackage{epsfig}
\usepackage{algorithmic,algorithm}

\usepackage{amsthm}

\newcommand{\etal}{\emph{et~al.}}

\newtheorem{lemma}{Lemma}
\newtheorem{theorem}{Theorem}

\long\def\ignore#1{}

\begin{document}

\title{Conflict-free graph orientations\\ with parity constraints
\thanks{Research was supported by NSF Grants \#CCF-0830734 and \#CBET-0941538. 
Work by T\'oth was also supported by NSERC grant RGPIN 35586.}
}

%
\author{ Sarah Cannon
    \thanks{Department of Mathematics, Tufts University, Medford, MA 02155, USA, scanno01@cs.tufts.edu} 
    \thanks{Department of Computer Science, Tufts University,
      Medford, MA 02155, USA, mishaque@cs.tufts.edu}  
\and
Mashhood Ishaque\footnotemark[3]
\and
Csaba D. T\'oth \footnotemark[3] \thanks{Department of Mathematics and Statistics, University of Calgary, AB, Canada, cdtoth@ucalgary.ca} }

\date{}
%

\maketitle

\vspace{-\baselineskip}

\begin{abstract}
It is known that every multigraph with an even number of edges has an even orientation ({\em i.e.}, all indegrees are even). We study parity constrained graph orientations under additional constraints. We consider two types of constraints for a multigraph $G=(V,E)$:
(1) an {\em exact conflict} constraint is an edge set $C\subseteq E$ and a vertex $v\in V$
such that $C$ should not equal the set of incoming edges at $v$;
(2) a {\em subset conflict} constraint is an edge set $C\subseteq E$ and a vertex $v\in V$ such that $C$
should not be a subset of incoming edges at $v$.
We show that it is NP-complete to decide whether $G$ has an even orientation
with exact or subset conflicts, for all conflict sets of size two or higher.
We present efficient algorithms for computing parity constrained orientations
with {\em disjoint} exact or subset conflict pairs.
\end{abstract}

\section{Introduction} \label{sec:intro}

An {\em orientation} of an undirected multigraph is an assignment of a direction to each edge.
It is well known~\cite{LP09} that a connected multigraph has an even orientation ({\em i.e.}, all indegrees are even) iff the total number of edges is even. In the {\em parity constrained orientation} ({\sc pco}) problem, we are given a multigraph $G=(V,E)$ and a function $p:V_0\rightarrow \{0,1\}$ for some subset $V_0\subseteq V$, and we wish to find an orientation of $G$ such that the indegree of every vertex $v\in V_0$ is $p(v)$ modulo 2, or report that no such orientation exists. This  problem has a simple solution in $O(|V| + |E|)$ time \cite{LP09}. Motivated by applications in geometric graph theory, we consider {\sc pco} under additional constraints of the following two types:

\begin{enumerate}
\item an {\em exact conflict} constraint is a pair $(C,v)\in 2^E\times V$ of a set $C$ of edges and a vertex $v$
     such that $C$ should not {\em equal} to the set of incoming edges at $v$;
\item a {\em subset conflict} constraint is a pair $(C,v)\in 2^E\times V$ of a set $C$ of edges and a vertex $v$
     such that $C$ should not be a {\em subset} of incoming edges at $v$;
\end{enumerate}


We denote by {\sc pco-ec} and {\sc pco-sc}, respectively, the {\sc pco} problem with exact conflicts and subset conflicts. We wish to find an orientation of $G$ such that the indegree of every vertex $v\in V_0$ is $p(v)$ modulo 2, and satisfies {\em all} additional constraints.

Two (exact or subset) conflicts, $(C_1,v_1)$ and $(C_2,v_2)$, are {\em disjoint} if $v_1\neq v_2$ or $C_1\cap C_2=\emptyset$. This means that disjoint conflicts at any fixed vertex correspond to disjoint edge sets. Let {\sc pco-dec} (resp., {\sc pco-dsc}) denote {\sc pco} with pairwise disjoint exact (resp., subset) conflicts.
If $|C|=k$ for some integer $k\in \mathbb{N}$ in all conflicts $(C,v)\in 2^E\times V$, we talk about the problems {\sc pco-$k$ec}, {\sc pco-$k$sc}, {\sc pco-$k$dec}, and {\sc pco-$k$dsc}. If $|C|=2$ for a conflict $(C,v)$, we say that $C$ is a {\em conflict pair} of edges.

\smallskip
\noindent {\bf Results.}
We show that {\sc pco-ec} and {\sc pco-sc} are NP-complete, and in fact already {\sc pco-2ec} and {\sc pco-2sc} are NP-complete. These problems are fixed parameter tractable: if $G$ has $m$ edges and there are $s_k$ conflicts of size $k=2,3,\ldots$, then they can be solved in $O( (m^{1.5} + n(n+m)) \prod_{k\geq 2}(k+1)^{s_k} )$ and $O( (n+m) \prod_{k\geq 2}k^{s_k})$ time, respectively. On the other hand, we present polynomial time algorithms for the variants with pairwise disjoint conflicts. Specifically, if the multigraph $G=(V,E)$ has $n$ vertices and $m$ edges, then both {\sc pco-dec} and {\sc pco-dsc} can be solved in $O(m^{2.5})$ time. For {\sc pco-2dec}, if no feasible orientation exists, we can compute an orientation with the {\em maximum} number of vertices satisfying the parity constraints within the same runtimes.

\smallskip
\noindent{\bf Motivation.} An even orientation subject to disjoint exact conflict pairs was a crucial tool in the recent solution of the {\em disjoint compatible matching conjecture}~\cite{ABD08,IST11} (see details below). The {\em exact conflict} constraint differs substantially from typical constraints in combinatorial optimization---it cannot be expressed as a linear inequality with 0-1 variables corresponding to the orientation of the edges. This led us to start a systematic study of {\sc pco-ec}. For comparison, we also considered {\em subset} conflicts, which have a natural integer programming representation. The two types of conflict constraints are indeed quite different.

\smallskip
\noindent{\bf Application.}
The exact conflict pair constraint originates from the disjoint compatible matching problem in geometric graph theory. It is clear that every 1-factor ({\em i.e.}, matching) can be augmented to a 2-factor by adding new edges. This, however, is not always possible if the input is a crossing-free straight-line graph in the plane, and it has to be augmented with compatible ({\em i.e.}, noncrossing) straight-line edges.

\begin{figure}[hbtp]
  \centering
 \includegraphics[width=\columnwidth]{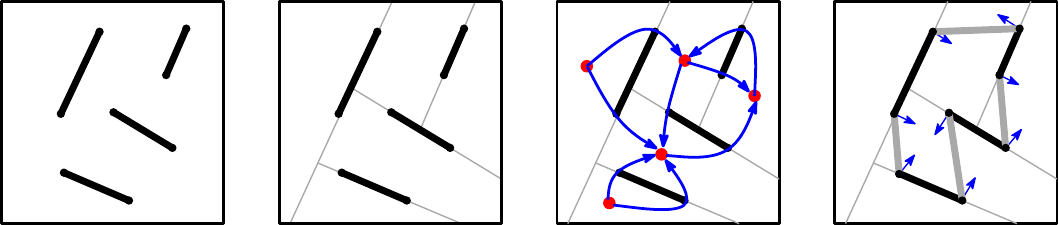}

  \vspace{-\baselineskip}
  \caption{\small An even geometric matching $M$. A convex decomposition.
  The dual graph with a conflict-free even orientation. An augmentation
  of $M$ to a 2-factor.\label{fig:geo-matchings}}
\end{figure}
It was conjectured~\cite{ABD08} that every geometric matching with an {\em even} number of edges
can be augmented to a crossing-free 2-factor (Disjoint Compatible Matching Conjecture). This conjecture
has recently been proved~\cite{IST11}. The new edges are added inside the faces of a convex
decomposition of the input matching. A crucial lemma in the proof claims that an augmentation to a 2-factor exists iff the {\em dual graph} of the convex decomposition has an even orientation that avoids a collection of pairwise disjoint exact conflict pairs~\cite{ABD08,IST11}. Our algorithm for {\sc pco-2dec} can decide whether such an orientation exists in $O(|M|^{2.5})$ time.

\smallskip
\noindent{\bf Related previous work.}
Graph orientations are fundamental in combinatorial optimization. It is a primitive often used for representing a variety of other problems. For example, {\em unique sink} orientations of polytopes are used for modeling pivot
rules in linear programming~\cite{SW01}, and {\em Pfaffian} orientations are used for
counting perfect matchings in a graph~\cite{LP09}. 

Hakimi~\cite{Hak65} gave equivalent combinatorial conditions for the existence of an edge orientation with prescribed indegrees. These were generalized by Frank~\cite{Fra80} for indegrees of subsets of vertices. Felsner~\etal~\cite{FFN10,FZ08} computed the asymptotic number of orientations with given indegrees.
The graph orientation problem where the indegree of each vertex must be between given upper and lower bounds was solved by Frank and Gy\'arf\'as~\cite{FG76}. Frank~{\em et al.}~\cite{FTS84} also solved the variant of this problem under parity constraints at the vertices. Frank~\etal~\cite{FJS01} characterized parity constrained graph orientations where the resulting digraph has $k$ edge-disjoint spanning arborescences with given roots. Frank and Kir\'aly~\cite{FK02} characterized graphs that admit $k$-edge-connected parity constrained orientations under any parity constraint where sum of parities equals the number of edges modulo~2.
Khanna~\etal~\cite{KNS05} devised approximation algorithms for an orientation constrained network design problem, but they do not consider parity or conflict constraints.

\smallskip
\noindent {\bf Proof techniques and organization.} The NP-hardness proofs and our algorithms are broken into elementary reduction steps, each of which uses some simple gadget, that is, a small graph with some carefully placed conflicts. These gadgets are quite remarkable and fun to work with, as they allow for a systematic treatment of all variants of the conflict-free graph orientation problem.

In Section~\ref{sec:npc}, we reduce {\sc (1-in-3)-sat} to {\sc pco-ec} and {\sc pco-sc}, independently. In Section~\ref{sec:alg}, we first reduce {\sc pco-2dec} to a maximum matching problem in a modified line graph. Then we reduce {\sc pco-dec}, with disjoint conflicts of size {\em at least 2,} to disjoint conflict {\em pairs}. Finally, the problem {\sc pco-dsc}, with disjoint subset conflicts, is reduced to {\sc pco-2dec}.


\section{NP-completeness for exact and subset conflicts\label{sec:npc}}


We reduce {\sc (1-in-3)-sat} to each of {\sc pco-2ec} and {\sc pco-2sc}. It follows that {\sc pco-$k$ec}, {\sc pco-$k$sc}, {\sc pco-ec} and {\sc pco-sc} are also NP-hard. {\sc (1-in-3)-sat} is known to be an NP-hard problem~\cite{AB09}. It asks whether a boolean expression in conjunctive normal form with 3 literals per clause can be satisfied such that each clause contains exactly one true literal.


\subsection{NP-completeness of {\sc pco-2ec}}


Let $I$ be an instance of {\sc (1-in-3)-sat} with variables $X_1,\ldots , X_n$ and clauses $C_1,\ldots , C_m$.
We construct a multigraph $G_I=(V,E)$ and a set $\mathcal{C}_I\subset {E\choose 2}\times V$ of exact conflict pairs (refer to Fig.~\ref{fig:reduction-ec}; arcs denote exact conflict pairs). For each variable $X_i$, we construct a caterpillar graph as follows. Create a chain of vertices labeled $x_{i1}, x_{i2}, x_{i3},\ldots , x_{i(2m)}$. Attach three edges to the first and the last vertex of this chain, and attach two edges to all interior vertices of the chain. We call these the {\em legs} of the caterpillar. At each vertex $x_{ij}$, let every pair of adjacent edges be an exact conflict pair.

\begin{figure} 
  \centering
 \includegraphics[width=0.7\columnwidth]{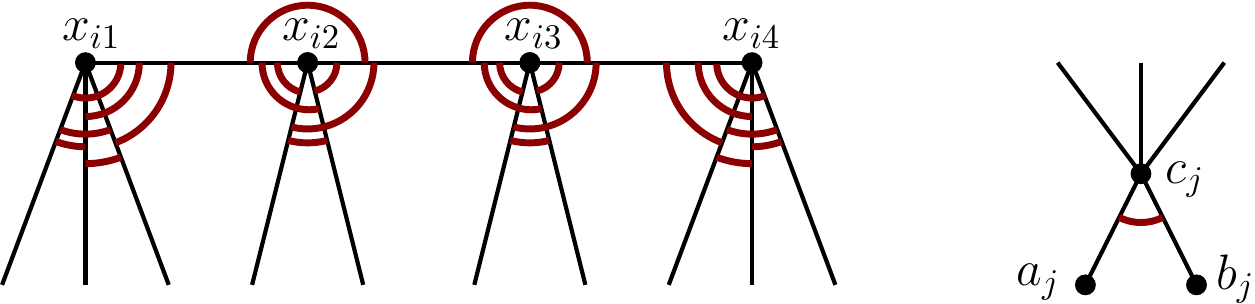}

  \caption{\small Left: variable gadget. Right: clause gadget. \label{fig:reduction-ec}}
\end{figure}

For each clause $C_j$, create a node $c_j$. We attach to $c_j$ a leg from each of the three caterpillars corresponding to the variables that appear in $C_j$. Specifically, if variable $X_i$ appears in clause $C_j$, attach some edge leaving vertex $x_{i(2j-1)}$ to $c_j$; if $\overline{X}_i$ appears in clause $C_j$, attach some edge leaving $x_{i(2j)}$ to $c_j$. At this point, each node $c_j$ has degree exactly 3, because each clause contains exactly three variables in instance $I$ of {\sc (1-in-3)-sat}. Additionally, for each node $c_j$ create two more nodes $a_j$ and $b_j$, each connected to $c_j$ by a single edge; make these edges an exact conflict pair.  Finally, create an additional node $v_0$, and connect all ``unused'' legs of the caterpillars to $v_0$. If $|E(G_I)|$ is odd, create an additional vertex $v_0'$ connected to $v_0$ by a single edge.
We will now show that a conflict-free even orientation of $G_I$ corresponds to a true instance $I$ of {\sc (1-in-3)-sat}.

\begin{lemma}
Instance $I$ of {\sc (1-in-3)-sat} is true iff there is an exact conflict-free even orientation for graph $G_I$ and set $\mathcal{C}_I$ of exact conflict pairs.
\end{lemma}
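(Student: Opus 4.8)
The plan is to prove both directions by tracking what a conflict-free even orientation forces along each caterpillar.

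First I would analyze a single caterpillar in isolation. At each vertex $x_{ij}$ the two (or three) incident edges that form exact conflict pairs mean: no two adjacent edges may both point into $x_{ij}$. For an interior vertex of degree four (two chain edges plus two legs), the indegree must be even and no two of the relevant pairs can be simultaneously incoming; combined with the parity requirement this should force the indegree to be $0$ or $4$ in a controlled way, and in particular force the two chain edges to behave consistently — either ``all chain edges of the caterpillar point one way'' (call this the \texttt{true} state) or ``the other way'' (\texttt{false}). The key structural claim to establish is: in any conflict-free even orientation, each caterpillar is in one of exactly two global states, and the state determines, for every clause $C_j$ in which $X_i$ appears, whether the leg edge attached at $x_{i(2j-1)}$ (for a positive occurrence) or at $x_{i(2j)}$ (for a negative occurrence) points toward $c_j$ or away from it. I would set up the labelling so that the \texttt{true} state sends exactly the ``literal-true'' leg into $c_j$.

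Next I would analyze the clause gadget. Node $c_j$ has degree $5$: three leg edges from the three caterpillars, plus the two edges to $a_j$ and $b_j$, which form an exact conflict pair. Since $\deg c_j$ is odd, indegree at $c_j$ is even, so it is $0$, $2$, or $4$; the $a_j b_j$ conflict pair forbids both of those edges being incoming, so at most one of them is incoming. Then I would argue that the nodes $a_j, b_j, v_0$ (and $v_0'$) are ``free'' — they carry no parity constraint, or their constraints can always be met by choosing the orientation of their pendant/low-degree edges last — so the only genuine restriction coming from the clause gadget is on the three leg edges. A short case check should show: a conflict-free even orientation with the indegree condition at $c_j$ exists extending a given choice of the three leg-edge directions if and only if exactly one of the three legs points into $c_j$. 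Combining this with the caterpillar analysis gives exactly the 1-in-3 condition per clause.

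For the forward direction (satisfying assignment $\Rightarrow$ orientation), I would explicitly build the orientation: put caterpillar $i$ in its \texttt{true} or \texttt{false} state according to the assignment of $X_i$, which fixes all chain edges and all leg edges; then orient the $a_j,b_j$ edges and the $v_0$ (and $v_0'$) edges to repair any remaining parity deficiency — here I would invoke the fact that the unconstrained vertices plus the guaranteed-even total number of edges let us always complete an even orientation (the standard $O(|V|+|E|)$ {\sc pco} solvability fact cited in the introduction), and check that no conflict pair is violated because every conflict pair is either inside a caterpillar (handled by the state) or is the $a_jb_j$ pair (at most one incoming by construction). For the reverse direction, I run the structural claims above: a conflict-free even orientation puts each caterpillar in a definite state, hence defines an assignment, and the clause-gadget analysis shows each clause gets exactly one true literal. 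The main obstacle I anticipate is the first step — rigorously pinning down that the exact conflict pairs plus evenness really do collapse each caterpillar to just two states, and in particular handling the degree-$5$ endpoints of the chain and the interaction of the two leg edges at each vertex; this requires a careful but elementary case analysis of indegrees $\{0,2,4\}$ (or $\{0,2\}$) subject to the ``no two adjacent edges both incoming'' constraints, which I would organize as a propagation argument along the chain.
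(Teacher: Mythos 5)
Your high-level skeleton matches the paper's proof (caterpillars have exactly two global states; the clause gadget enforces the 1-in-3 condition; the forward direction is an explicit construction), but there is a genuine error running through your analysis: you are using the \emph{subset}-conflict semantics where the construction uses \emph{exact} conflicts. An exact conflict pair $(\{e_1,e_2\},v)$ only forbids the incoming set at $v$ from being \emph{equal} to $\{e_1,e_2\}$; it does not forbid both edges from pointing in. Your statement ``no two adjacent edges may both point into $x_{ij}$'' would force indegree at most $1$, hence $0$ by evenness, at every caterpillar vertex --- which is impossible, since each chain edge must enter one of its endpoints. The correct deduction is: all $\binom{4}{2}$ pairs at the degree-$4$ vertex $x_{ij}$ are exact conflicts, so indegree $2$ is excluded, and evenness then forces indegree $0$ or $4$; the chain edges then propagate the alternation $0,4,0,4,\ldots$, which is the two-state claim you want. (Also note the chain endpoints have degree $4$, not $5$: one chain edge plus three legs.)

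The same misreading breaks your clause-gadget analysis. You claim the conflict pair at $c_j$ forces ``at most one'' of $a_jc_j,\,b_jc_j$ to be incoming and that $a_j,b_j$ are parity-free. In an \emph{even} orientation every vertex is constrained, so the degree-$1$ vertices $a_j,b_j$ must have indegree $0$, forcing \emph{both} edges into $c_j$. The exact conflict pair then rules out indegree exactly $2$ at $c_j$, so the indegree (even, at most $5$) must be $4$: exactly two legs in and exactly one leg out. Since a true literal corresponds (in the paper's caterpillar states) to a leg oriented \emph{away} from $c_j$, this yields exactly one true literal per clause; your proposed condition ``exactly one leg points into $c_j$'' is the wrong one, and under your labelling the gadget would certify two true literals rather than one. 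Finally, for the forward direction you should not need a generic PCO completion step: the constructed orientation determines every edge, and $v_0$'s parity comes for free because the total edge count is even and all other indegrees are even.
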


\begin{proof}
Assume instance $I$ of {\sc 1-in-3-sat} is true. That is, there is a valid truth assignment for all variables $X_i$ such that each clause $C_j$ contains exactly one true literal. We construct a conflict-free even orientation for $G_I$ and $\mathcal{C}_I$ as follows. If variable $X_i$ is true, then for all $\ell$, orient all edges incident to $x_{i(2\ell-1)}$ towards  $x_{i(2\ell-1)}$, and all edges incident to $x_{i(2\ell)}$ away from $x_{i(2\ell)}$. Note that the indegree of the vertices $x_{i\ell}$ in the caterpillar are alternately 0 and 4, and hence no exact conflict pair equals the set of edges oriented into one of these nodes. Similarly, if variable $X_i$ is false, then orient all edges incident to $x_{i(2\ell-1)}$ away from $x_{i(2\ell-1)}$, and all edges incident to $x_{i(2\ell)}$ towards $x_{i(2\ell)}$. Orient each edge $a_jc_j$ and $b_jc_j$ towards $c_j$. The legs of caterpillars oriented away from $c_j$ correspond to a `true'  assignment while legs oriented into $c_j$ correspond to a `false' assignment. Since exactly one literal in each $C_j$ is true, exactly one of 5 incident edges is oriented away from $c_j$. That is, the indegree of each $c_j$ is 4. We now have a conflict-free even orientation for $G_I$ and $\mathcal{C}_I$, as required.

Assume now that there is a conflict-free even orientation for $G_I$ and $\mathcal{C}_I$. The parity constraints and the conflict pairs ensure that all 4 edges incident to each $x_{i\ell}$ are oriented either to $x_{i\ell}$ or away from $x_{i\ell}$. Therefore, the indegrees of the nodes $x_{i1},\ldots , x_{i(2m)}$ are alternately 0 and 4. If all incident edges are oriented into $x_{i1}$, then set $X_i$ `true,' otherwise set $X_i$ `false.'
Our construction ensures that the indegree of each $c_j$ is exactly four. Since both $a_jc_j$ and $b_jc_j$ must be oriented into $c_j$, exactly two legs of some caterpillars are oriented to $c_j$ (and exactly one away from $c_j$). This guarantees that each $C_j$ contains exactly two false literals and one true literal, and so this truth assignment for all variables is a valid solution to instance $I$ of {\sc 1-in-3-sat}.
\end{proof}

By augmenting the conflict sets by additional edges, if necessary, we see that {\sc pco-$k$ec} is also NP-hard. It is clear that these problems are in NP: one can check in linear time whether the parity and all additional constraints are satisfied.
\begin{theorem}
Problems {\sc pco-ec} and {\sc pco-$k$ec}, for every $k\geq 2$, are NP-complete.
\end{theorem}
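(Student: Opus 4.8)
The plan is to leverage the preceding lemma, which already supplies a polynomial-time reduction from {\sc (1-in-3)-sat} to {\sc pco-2ec}. Since {\sc (1-in-3)-sat} is NP-hard, {\sc pco-2ec} is NP-hard, and hence so is {\sc pco-ec}: the latter contains {\sc pco-2ec} as the special case in which every conflict set has size two, so the identity map is a valid reduction. It therefore remains only to establish hardness of {\sc pco-$k$ec} for each fixed $k\geq 3$, and to note that all these problems lie in NP.

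For {\sc pco-$k$ec} with $k\geq 3$ I would reduce from {\sc pco-2ec}. Given a {\sc pco-2ec} instance on a multigraph $G=(V,E)$ with parity function $p:V_0\to\{0,1\}$ and a family of exact conflict pairs, process each conflict pair $(C,v)$, $C=\{e_1,e_2\}$, independently: introduce $k-2$ fresh vertices $u_1,\ldots,u_{k-2}$, add the edges $f_i=u_iv$, put every $u_i$ in $V_0$ with $p(u_i)=0$, and replace $(C,v)$ by the enlarged conflict $(C\cup\{f_1,\ldots,f_{k-2}\},v)$, whose size is exactly $k$. Since each $u_i$ has degree one, its even-indegree requirement forces $f_i$ to be oriented into $v$ in every orientation satisfying the parity constraints; the new edges thus contribute a fixed amount to the indegree of $v$, which is absorbed by adjusting the parity target at $v$ to compensate for the total number of pendant edges forced into $v$ (and leaving $v\notin V_0$ untouched). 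Hence the feasible orientations of the new instance restrict, by deleting the forced pendant edges, bijectively to the feasible orientations of the original one; and since each $f_i$ is always incoming at $v$, the set $C\cup\{f_1,\ldots,f_{k-2}\}$ equals the incoming set at $v$ precisely when $C$ equals the incoming set at $v$ in the original instance. So the constructed {\sc pco-$k$ec} instance is a yes-instance iff the {\sc pco-2ec} instance is, and it is built in polynomial time. (Equivalently, one may attach these $k-2$ pendant edges to each conflict set directly inside the graph $G_I$ of the lemma.)

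Membership in NP is immediate for both problems: given a candidate orientation of $G$, one verifies in $O(|V|+|E|)$ time that the indegree of every $v\in V_0$ is congruent to $p(v)$ modulo $2$ and that no conflict set $C$ coincides with the set of edges oriented into its vertex. Combining this with the hardness statements above yields NP-completeness of {\sc pco-ec} and of {\sc pco-$k$ec} for every $k\geq 2$.

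The one step that needs care---more bookkeeping than genuine difficulty---is the analysis of the size-augmentation gadget: one must check that the pendant edges are truly forced (so the reduction adds no spurious orientations), that the parity correction at $v$ is exactly right, and that feasibility is preserved in both directions, so the correspondence between the two problems is tight.
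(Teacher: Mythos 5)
Your overall strategy is the paper's: hardness of {\sc pco-ec} is immediate because {\sc pco-2ec} is a special case, hardness of {\sc pco-$k$ec} is obtained by padding the size-two conflicts up to size $k$ with extra edges, and membership in NP is a linear-time check. However, your specific padding gadget has a genuine flaw. You attach a \emph{fresh} set of $k-2$ forced-incoming pendant edges to $v$ \emph{for each} conflict pair at $v$. If a vertex $v$ carries two conflicts $C_1$ and $C_2$, with pendant sets $F_1$ and $F_2$, then every parity-feasible orientation of the new instance has all of $F_1\cup F_2$ oriented into $v$; consequently $C_1\cup F_1$ can never equal the incoming set at $v$ (it misses $F_2$), and the enlarged conflicts become vacuous. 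Your claimed equivalence ``$C\cup\{f_1,\ldots,f_{k-2}\}$ equals the incoming set at $v$ precisely when $C$ equals the incoming set at $v$ in the original instance'' is therefore false at any vertex with two or more conflicts, and the reduction can map a NO-instance of {\sc pco-2ec} to a YES-instance of {\sc pco-$k$ec}.

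This failure mode is not a corner case you can assume away: the hard instances produced by the Lemma have many conflict pairs at each caterpillar vertex $x_{ij}$, and instances with at most one conflict per vertex are in particular instances with pairwise \emph{disjoint} conflicts, which Section~\ref{sec:alg} solves in polynomial time --- so the hardness of {\sc pco-2ec} lives precisely in the regime where your gadget breaks. The repair is small and recovers the paper's intent (``augmenting the conflict sets by additional edges''): attach a \emph{single} set $F_v$ of $k-2$ forced-incoming pendant edges to each vertex $v$ that carries at least one conflict, and replace \emph{every} conflict $(C,v)$ by $(C\cup F_v,v)$, adjusting $p(v)$ by $(k-2)\bmod 2$ when $v\in V_0$. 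Then the incoming set at $v$ is always $I\cup F_v$ with $I$ the incoming original edges, and $C\cup F_v=I\cup F_v$ iff $C=I$, which gives the tight correspondence you wanted. With that change, your argument (including the NP-membership check) matches the paper's proof.
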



\subsection{NP-completeness of {\sc pco-2sc}}


We now reduce {\sc (1-in-3)-sat} to {\sc pco-2sc}. Let $I$ be an instance of {\sc (1-in-3)-sat} with variables $X_1,\ldots , X_n$ and clauses $C_1,\ldots , C_m$. We construct a multigraph $G_I=(V,E)$ and a set $\mathcal{C}_I\subset {E\choose 2}\times V$ of subset conflict pairs (Fig.~\ref{fig:reduction-sc}; arcs denote subset conflict pairs).
For each variable $X_i$, create a circuit $(x_{i1},x_{i2},\ldots , x_{im})$ of length $m$. Label the edge connecting $x_{i\ell}$ and $x_{i,\ell+1}$ as $z_{il}$. (All arithmetic with index $\ell$ is performed modulo $m$). To each $x_{i\ell}$, attach two additional edges, $y_{i\ell}$ and $\overline{y}_{i\ell}$. Mark the edge pairs $\{z_{i\ell},z_{i,\ell+1}\}$, $\{z_{i\ell},y_{i,\ell+1}\}$, and $\{z_{i\ell},\overline{y}_{il}\}$ as subset conflict pairs.

For each clause $C_j$, create a node $c_j$. If $X_i$ is in clause $C_j$, attach the edge $y_{ij}$ to $c_j$. Similarly, if $\overline{X_i}$ occurs in clause $C_j$, attach the edge $\overline{y}_{ij}$ to $c_j$. At this point, each node $c_j$ should have degree 3, since each clause in instance $I$ of {\sc (1-in-3)-sat} contains three variables.
Label every pair of edges incident on $c_j$ as a subset conflict pair. Additionally, for each node $c_j$ create one more node $a_j$ connected to $c_j$ by a single edge. Finally, create an additional node $v_0$ and connect it to all ``unused'' edges $y_{i\ell}$ or $\overline{y}_{i\ell}$; if $|E(G_I)|$ is odd, create node $v_0'$ and connect it to node $v_0$ by a single edge.

\begin{figure} 
  \centering
 \includegraphics[width=.7\columnwidth]{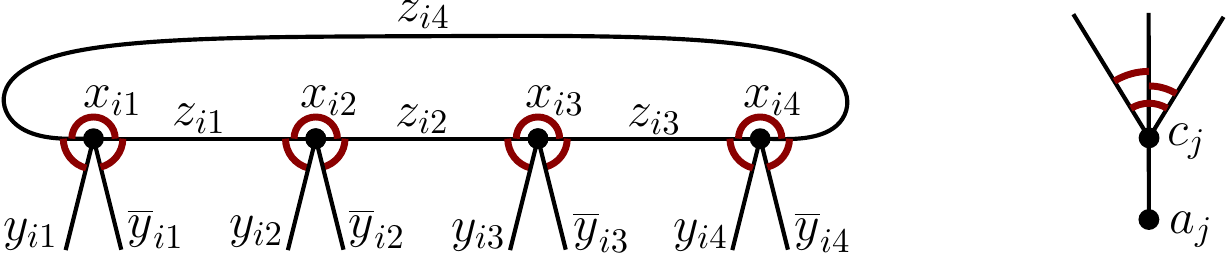}

  \caption{\small Left: variable gadget. Right: clause gadget.\label{fig:reduction-sc}}
\end{figure}

\begin{lemma}\label{lem:reduction-sc}
Instance $I$ of {\sc (1-in-3)-sat} is true iff there is a subset conflict-free even orientation for graph $G_I$ and set $\mathcal{C}_I$ of subset conflict pairs.
\end{lemma}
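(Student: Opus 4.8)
The plan is to follow the same template as the {\sc pco-2ec} reduction above: pin down the ``legal'' states of each variable gadget, show that a state forces the pendant edges $y_{i\ell},\overline{y}_{i\ell}$ uniformly over $\ell$, and show that the clause gadget is conflict-free and parity-consistent precisely when exactly one of its three literal edges is oriented into $c_j$.

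First I would analyze a single cycle vertex $x_{i\ell}$. It has degree four, with incident edges $z_{i,\ell-1}$, $z_{i\ell}$, $y_{i\ell}$, $\overline{y}_{i\ell}$, and the subset conflicts that touch it are exactly $\{z_{i,\ell-1},z_{i\ell}\}$, $\{z_{i,\ell-1},y_{i\ell}\}$, and $\{z_{i\ell},\overline{y}_{i\ell}\}$ (the instances at $x_{i\ell}$ of the three conflict families in the construction, each pair sharing the vertex $x_{i\ell}$). The conflict $\{z_{i,\ell-1},z_{i\ell}\}$ says that $x_{i\ell}$ is not a sink of the cycle $(x_{i1},\ldots,x_{im})$; ranging over $\ell$, the cycle has no sink, and since on any cycle the number of sinks equals the number of sources, the cycle edges must form a consistently oriented directed cycle. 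Hence $X_i$ has exactly two \emph{states}. In the state where $z_{i\ell}$ leaves $x_{i\ell}$ for every $\ell$, the conflict $\{z_{i,\ell-1},y_{i\ell}\}$ forces $y_{i\ell}$ out of $x_{i\ell}$, the conflict $\{z_{i\ell},\overline{y}_{i\ell}\}$ is vacuous, and evenness of the indegree of $x_{i\ell}$ forces $\overline{y}_{i\ell}$ into $x_{i\ell}$; the other state is symmetric with $y_{i\ell}$ and $\overline{y}_{i\ell}$ exchanged. So a state determines, uniformly over $\ell$, which of $y_{i\ell},\overline{y}_{i\ell}$ leaves $x_{i\ell}$, hence which of them points into the clause node (or into $v_0$) to which it is attached.

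Next I would fix the convention ``$X_i$ true'' $\leftrightarrow$ the state in which every $y_{i\ell}$ leaves $x_{i\ell}$, and check the four cases (positive/negative occurrence, variable true/false) to confirm that under this convention a literal of $C_j$ is satisfied if and only if the attached edge ($y_{ij}$ or $\overline{y}_{ij}$) is oriented \emph{into} $c_j$. For the clause gadget the key point is that $a_j$ has degree one, so evenness of its indegree forces $a_jc_j$ into $c_j$. The three conflicts at $c_j$ are exactly the three pairs among its literal edges, so at most one literal edge enters $c_j$; together with the forced edge $a_jc_j$ and evenness of the indegree of $c_j$, exactly one literal edge enters $c_j$, i.e.\ exactly one literal of $C_j$ is satisfied.

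This gives both directions. Given a satisfying assignment of $I$, put each variable gadget in the corresponding state, orient $a_jc_j$ and the edge $v_0v_0'$ (if present) into $c_j$ and $v_0$ respectively, and orient each ``unused'' pendant edge as dictated by the variable states; all conflicts hold by the local analysis, the indegrees at $x_{i\ell}$, $c_j$, $a_j$ are $2$, $2$, $0$, and the indegree of $v_0$ is even because all indegrees sum to $|E(G_I)|$, which is even by construction, and every other indegree is even. Conversely, a conflict-free even orientation puts each variable gadget into one of its two states, and these states spell out a satisfying assignment by the clause analysis. I expect no serious obstacle here: the parts that merely need care (rather than computation) are keeping the orientation/truth correspondence consistent across positive and negative occurrences, and the parity bookkeeping around $v_0$ and $v_0'$ (which is where the optional vertex $v_0'$ and the absence of any conflict at $v_0$ matter); the one genuinely load-bearing step is the ``no sink on a cycle'' argument that pins each variable gadget to two states.
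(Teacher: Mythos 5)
Your proposal is correct and follows essentially the same route as the paper's proof: the cycle conflicts force a consistent cyclic orientation (two states per variable), the conflicts $\{z_{i,\ell-1},y_{i\ell}\}$, $\{z_{i\ell},\overline{y}_{i\ell}\}$ plus parity pin the pendant edges, the degree-one vertex $a_j$ forces $a_jc_j$ into $c_j$, and the clause conflicts plus evenness give exactly one literal edge entering $c_j$. The only differences are that you spell out details the paper leaves implicit (the sink/source counting on the circuit and the parity bookkeeping at $v_0$ and $v_0'$), which is fine.
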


\begin{proof}
Assume that instance $I$ of {\sc (1-in-3)-sat} is true. That is, there exists a truth assignment for all variables $X_i$ such that exactly one literal in each clause $C_j$ is true. If $X_i$ is true, orient edge $z_{i\ell}$ from $x_{i\ell}$ to $x_{i,\ell+1}$ for all $\ell$; orient $y_{i\ell}$ away from $x_{i\ell}$; and orient $\overline{y}_{i\ell}$ into $x_{i\ell}$. Then, at each $x_{i\ell}$, the indegree is 2, but no two conflicting edges are oriented into $x_{i\ell}$. If $X_i$ is false, orient all edges of the variable gadget in the opposite way. Since each $C_j$ has exactly one true literal, exactly one of the three edges from variable gadgets is oriented into $c_j$. Orient edge $a_jc_j$ into $c_j$; now, the indegree of both $a_j$ and $c_j$ is even, and no two edges oriented into $c_j$ are in conflict.
We have constructed a conflict-free even orientation of $G_I$.

Assume now that there exists a conflict-free even orientation for $G_I$ and $\mathcal{C}_I$. The subset conflict pairs along the circuit $(x_{i1},\ldots , x_{im})$ ensure that the circuit is oriented cyclically. If $z_{i1}$ is oriented away from $x_{i1}$, then set $X_i$ to `true,' otherwise to `false.' In either case, exactly one edge of the circuit is oriented into each $x_{i\ell}$. Since the indegree has to be even, exactly one of $y_{i\ell}$ and $\overline{y}_{i\ell}$ is oriented into $x_{i\ell}$. The subset conflicts imply that if $X_i$ is true, $\overline{y}_{i\ell}$ is oriented into $x_{i\ell}$ and $y_{i\ell}$ is oriented away, while if $X_i$ is false, $y_{i\ell}$ is oriented into $x_{i\ell}$ and $\overline{y}_{i\ell}$ is oriented away.
In other words, edges oriented towards $c_j$ correspond to an assignment of `true' for the corresponding variable, while edges oriented away from $c_j$ correspond to assignments of `false' for the corresponding variables. For each node $c_j$, the conflicts we imposed ensure that exactly two edges are oriented into $c_j$,
and one of them is $a_jc_j$. Hence, exactly one variable in each clause has been set to true, as required.
\end{proof}

By augmenting the conflict sets with additional edges, if necessary, we see that {\sc pco-$k$sc} is also NP-hard. It is clear that these problems are in NP: one can check in linear time whether the parity and all additional constraints are satisfied.
\begin{theorem}
Problems {\sc pco-sc} and {\sc pco-$k$sc}, for every $k\geq 2$, are NP-complete.
\end{theorem}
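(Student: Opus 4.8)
The plan is to package Lemma~\ref{lem:reduction-sc} as a full NP-completeness proof, and then to bootstrap from conflict \emph{pairs} to conflicts of any fixed size $k\ge 2$ by a small padding construction. For membership in NP, I would take an orientation of $G$ as the certificate: given it, one reads off every indegree and checks in $O(n+m)$ time that each $v\in V_0$ has indegree congruent to $p(v)$ modulo $2$, and in time linear in the total description length of the conflict list that no conflict set $C$ is contained in the set of edges entering its vertex. Hence {\sc pco-sc} and every {\sc pco-$k$sc} lie in NP.

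For NP-hardness of {\sc pco-2sc}, I would invoke the NP-hardness of {\sc (1-in-3)-sat} together with the fact that $(G_I,\mathcal C_I)$ is computable from $I$ in polynomial time (it has $O(nm)$ vertices, edges, and conflict pairs). Lemma~\ref{lem:reduction-sc} states that $I$ is satisfiable iff $(G_I,\mathcal C_I)$ admits a subset-conflict-free even orientation, i.e.\ iff the corresponding {\sc pco-2sc} instance (with $V_0=V$ and $p\equiv 0$) is feasible; this is a polynomial-time many-one reduction, so {\sc pco-2sc} is NP-hard, hence NP-complete. Since every {\sc pco-2sc} instance is in particular a {\sc pco-sc} instance, {\sc pco-sc} is NP-hard, hence NP-complete as well.

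For NP-hardness of {\sc pco-$k$sc} with $k\ge 3$, I would reduce from {\sc pco-2sc} by padding each conflict up to size exactly $k$. Given an instance $(G,p,\mathcal C)$ in which every conflict has size $2$, for each conflict $(\{e_1,e_2\},v)$ I attach $k-2$ new degree-one vertices $u_1,\dots,u_{k-2}$ to $v$ via edges $f_t=u_tv$ and impose $p(u_t)=0$. A degree-one vertex with parity constraint $0$ must have indegree $0$ in any feasible orientation, so each $f_t$ is forced to point into $v$; consequently $\{e_1,e_2,f_1,\dots,f_{k-2}\}$ is contained in the set of edges entering $v$ if and only if both $e_1$ and $e_2$ are. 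Replacing $(\{e_1,e_2\},v)$ by $(\{e_1,e_2,f_1,\dots,f_{k-2}\},v)$, and correcting $p(v)$ to absorb the newly forced in-edges (flipping it when $k$ is odd, and adding $v$ to $V_0$ if it was unconstrained), produces an equivalent instance whose conflicts all have size exactly $k$ and whose size is $O(|G|+k\,|\mathcal C|)$. This gives the reduction, so {\sc pco-$k$sc} is NP-hard, hence NP-complete.

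I do not expect a genuine obstacle here---Lemma~\ref{lem:reduction-sc} does the real work. The only point requiring care is the parity bookkeeping in the padding step when $k$ is odd: a vertex $v$ may belong to several conflict pairs, so it receives several batches of $k-2$ forced pendant edges, and the correction applied to $p(v)$ must account for the total number of them; alternatively, one can hang one extra pendant edge \emph{forced out of} $v$ (and deliberately kept out of every conflict set) for each odd-sized batch, so that $p(v)$ need not change at all. Everything else is a routine check.
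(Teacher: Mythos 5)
Your argument follows the paper's route: membership in NP by checking a given orientation, NP-hardness of {\sc pco-2sc} (and hence {\sc pco-sc}) from Lemma~\ref{lem:reduction-sc}, and then {\sc pco-$k$sc} by ``augmenting the conflict sets with additional edges'' --- your pendant-vertex padding is a correct concrete instantiation of exactly the step the paper leaves as a one-line remark, and the forced orientation of the padding edges is precisely what is needed so that subset containment of the enlarged conflict is equivalent to containment of the original pair.

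Two small bookkeeping slips, though, both fixable in a line. First, ``adding $v$ to $V_0$ if it was unconstrained'' is wrong: if $v\notin V_0$ there is no parity requirement at $v$ to correct, and imposing one would strengthen the instance and could destroy equivalence; simply leave $v$ unconstrained. Second, your alternative remedy for odd $k$ --- hanging an extra pendant edge \emph{forced out of} $v$ --- does not work, since an edge oriented away from $v$ contributes nothing to $v$'s indegree and so cannot absorb the parity shift caused by the $k-2$ forced-in edges; what you want is one extra pendant edge forced \emph{into} $v$ (kept out of every conflict set), making the number of forced-in padding edges per batch even, so that $p(v)$ is indeed untouched. With those corrections (or just your primary fix: flip $p(v)$ for constrained $v$ according to the total number of forced-in padding edges over all batches at $v$), the proof is complete and matches the paper's intent.
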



\subsection{Fixed Parameter Tractability}


We now show that {\sc pco-ec} and {\sc pco-sc} are fixed-parameter tractable.  If $G$ has $m$ edges and there are $s_k$ conflicts of size $k=2,3,\ldots$, then these problems can be solved in $O( (m^{1.5} + n(n+m)) \prod_{k\geq 2}(k+1)^{s_k} )$ and $O( (n+m) \prod_{k\geq 2}k^{s_k})$ time, respectively. 

First we consider {\sc pco-sc}. For each subset conflict set $S$ of size $k$ incident on vertex $v$, in any valid parity constrained orientation at least one edge of $S$ must be oriented away from $v$.   Arbitrarily choose one edge $e$ from each conflict set $S$ to be oriented away; there are at most $\prod_{k\geq 2}k^{s_k} = O(1)$ ways to do this.  Call this set of selected edges $E^{*}$.  For every edge $e$ in $E^{*}$, where $e$ is part of subset conflict set $S_e$, $e$ connects $v_e$, the vertex on which all edges in $S_e$ are incident, to $w_e$, some other vertex. To form a new graph $G'$, remove edge $e$ and if node $w_e$ has a parity constraint, reverse it.  A parity constrained graph orientation on $G'$ yields a solution to {\sc pco-sc} on $G$, obtained by reinserting all edges $e$ in $E^{*}$ with an orientation away from $v_e$ towards $w_e$.  If there is no parity constrained orientation on $G'$, repeat for some other set $E^{*}$.  If no set $E^{*}$ yields a parity constrained orientation on $G'$, there is no solution to {\sc pco-sc} on G.  If $m$ is the number of edges in $G$, which differs from the number of edges in $G'$ only by a constant, then it is known that each execution of the {\sc pco} algorithm takes $O(n+m)$ time \cite{LP09}.  The {\sc pco} algorithm is run at most $\prod_{k\geq 2}k^{s_k}$ times, giving a total polynomial runtime of $O( (n+m) \prod_{k\geq 2}k^{s_k})$.

Now we consider {\sc pco-ec}. We will reduce {\sc pco-ec} to {\sc pco-vd}, the parity constrained orientation problem in which each vertex has a minimum indegree given by a vertex demand function $F:V\rightarrow \mathbb{Z}$; this problem has already been solved by Frank~{\em et al.}~\cite{FTS84}. For each exact conflict set $S$ of size $k$ incident on vertex $v$, in any valid parity constrained orientation either (1) at least one edge must be oriented away, or (2) all edges are oriented toward $v$ and $v$ has indegree more than $k$.  For each exact conflict set, either choose an edge $e$ to orient away from $v$ or place a demand of $k+1$ on node $v$.  There are at most $\prod_{k\geq 2}(k+1)^{s_k} = O(1)$ ways to do this.  Call the set of selected edges $E^{*}$ and the set of selected vertices $V^*$. Remove the edges in $E^*$ as described previously, flipping the parity requirements of nodes $w_e$, to form graph $G'$. Search for a solution to {\sc pco-vd} on $G'$; such a solution yields a solution to {\sc pco-ec} when all edges $e$ in $E^*$ are reinserted and oriented away from $v_e$, towards $w_e$. Run for all possible sets $E^* \cup V^*$; if no valid solution to {\sc pco-vd} on $G'$ is found, there is no valid solution to {\sc pco-ec} on $G$.  If $n$ is the number of vertices and $m$ is the number of edges in $G$, then solving {\sc pco-vd}requires time $O(m^{1.5} + n(n+m))$ and
this algorithm runs in polynomial time at most $O( (m^{1.5} + n(n+m)) \prod_{k\geq 2}(k+1)^{s_k})$.


\section{Polynomial time algorithms\label{sec:alg}}


In this section we present polynomial time algorithms for {\sc pco-dec} and {\sc pco-dsc}.
We start by showing that in most cases we can restrict our attention to {\em even} orientations.
In the {\em even orientation} problem ({\sc eo}), we are given a multigraph $G=(V,E)$, and we wish to find an orientation of $G$ where {\em every} vertex has even indegree. Analogously to the variants of {\sc pco} with additional constraints, we introduce the problems {\sc eo-ec}, {\sc eo-dec}, and {\sc eo-$k$dec} for exact conflicts and {\sc eo-sc}, {\sc eo-dsc} and {\sc eo-$k$dsc} for subset conflicts.

We reduce {\sc pco} and most of its variants to the corresponding even orientation problems. The notable exception is {\sc pco-dec}.  We prove that the reduction holds for the more general optimization version of these problems as well. In the optimization version of {\sc pco} with possible additional constraints, we wish to find a conflict-free orientation which satisfies the maximum number of parity constraints.


\begin{lemma}\label{lem:even}
The optimization versions of {\sc pco}, {\sc pco-ec}, {\sc pco-sc}, and {\sc pco-dsc}, can be reduced to the corresponding version of {\sc eo} in linear time.
\end{lemma}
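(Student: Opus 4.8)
The plan is to use the standard trick for turning a parity-constrained orientation problem into a pure even-orientation problem: add a single auxiliary vertex $u^*$ and attach it by edges to all vertices whose parity demand is odd (or more precisely, to all vertices in $V_0$ together with $u^*$ itself as needed so that the number of new edges has the right parity), so that an orientation is even on the enlarged graph exactly when it realizes the prescribed parities on the original graph. First I would describe this construction precisely: given $G=(V,E)$ and $p:V_0\to\{0,1\}$, let $V_0^{\mathrm{odd}}=\{v\in V_0: p(v)=1\}$, introduce a new vertex $u^*$, and for each $v\in V_0^{\mathrm{odd}}$ add an edge $vu^*$. If $|V_0^{\mathrm{odd}}|$ is odd one must also fix the parity at $u^*$ itself; the clean way is to additionally add one more vertex and edge (as the paper already does with $v_0'$), or simply observe $u^*$ has the leftover parity forced. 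Call the resulting multigraph $G^*$. The key observation is that in any orientation of $G^*$, the indegree of $v\in V$ differs from its indegree in the induced orientation of $G$ by (number of edges $vu^*$ oriented into $v$) $\in\{0,1\}$, and this is exactly the correction needed to flip the parity at the odd-demand vertices; so even orientations of $G^*$ correspond bijectively to parity-respecting orientations of $G$ (the orientation of the auxiliary edges being forced by the parity condition at $u^*$ and at each $v\in V_0^{\mathrm{odd}}$). This reduction is clearly linear-time.

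Next I would handle the conflict constraints. The new edges incident to $u^*$ are never placed in any conflict set, and the conflict sets $(C,v)$ of the original instance are carried over verbatim with the same vertex $v\in V$. Since the indegree of $v$ as a \emph{set of incoming edges among $E$} is unchanged — the auxiliary edge, if present and oriented into $v$, is never in $C$ — a conflict $(C,v)$ is violated in the even orientation of $G^*$ iff it is violated in the corresponding orientation of $G$. Hence exact conflicts map to exact conflicts and subset conflicts map to subset conflicts, so {\sc pco-ec}, {\sc pco-sc}, and {\sc pco-dsc} reduce to {\sc eo-ec}, {\sc eo-sc}, {\sc eo-dsc} respectively (disjointness of conflict pairs is preserved, as we touch only $u^*$). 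For the optimization versions: the correspondence above is a bijection between orientations of $G^*$ and orientations of $G$, and under it the number of violated parity constraints in $G$ equals the number of odd-indegree vertices in $V$ under the $G^*$-orientation. To make \emph{every} vertex of $G^*$ a candidate for ``even'' we should only penalize vertices in $V$, so I would phrase {\sc eo}'s optimization objective as maximizing the number of \emph{even-indegree original vertices} (leaving $u^*$ unconstrained, which is harmless since in the reduction direction the auxiliary vertex plays no role in the count); alternatively note that fixing $u^*$ by the extra vertex $v_0'$ makes $u^*$ automatically even in any relevant orientation. Either way the optima coincide.

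The main obstacle I anticipate is the bookkeeping around the parity of $|V_0^{\mathrm{odd}}|$ and the status of $u^*$: if that count is odd, the ``even orientation'' condition on $G^*$ is infeasible at $u^*$ unless we either relax the demand at $u^*$ or add the extra vertex/edge, and one must check this does not spuriously change feasibility or the optimum of the original problem. I would resolve this exactly as the paper does in its NP-hardness constructions (the $v_0'$ device): add $v_0'$ and edge $v_0'u^*$ iff needed, put $v_0'$ and $u^*$ outside the set of vertices whose parity we care about (for the decision version they get no constraint; for optimization they are excluded from the count), and verify the one-line parity computation that the forced orientation of auxiliary edges realizes precisely the target parities on $V_0$. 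The remaining step — checking that this all goes through for {\sc pco-dsc} with the disjointness hypothesis intact, but noting it \emph{fails} to give anything useful for {\sc pco-dec}, which is why the lemma excludes it — requires only the remark that an \emph{exact} conflict could in principle be ``escaped'' by routing the auxiliary edge, but since auxiliary edges carry no conflicts this is not the issue; rather, {\sc pco-dec}'s algorithm in Section~\ref{sec:alg} genuinely needs the parity freedom at non-$V_0$ vertices, so we simply do not claim the reduction there.
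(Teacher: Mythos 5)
There are genuine gaps here. First, your single shared auxiliary vertex $u^*$ breaks the backward direction of the reduction: in an even orientation of $G^*$, the condition at a vertex $v$ with $p(v)=1$ is ``indegree even,'' which can be met either by orienting $vu^*$ into $v$ (so $v$ is odd in $G$, as desired) or by orienting $vu^*$ away from $v$ (so $v$ is even in $G$, violating $p(v)=1$). The parity condition at $u^*$ only fixes the number of auxiliary edges entering $u^*$ modulo $2$, not which ones, so the auxiliary orientations are not ``forced'' as you claim. The paper avoids this by giving each odd-demand vertex its \emph{own} degree-one dummy neighbour $v'$: evenness at a degree-one vertex forces indegree $0$, hence forces $vv'$ into $v$. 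Second, you never handle the unconstrained vertices $V\setminus V_0$: {\sc eo} demands even indegree at \emph{every} vertex, so your $G^*$ imposes a spurious evenness constraint on them. The paper's Step~2 attaches a dummy edge from each such vertex to a common hub precisely to restore their parity freedom; and note that redefining the optimization objective to count only ``original'' vertices, as you suggest, changes the target problem rather than reducing to it --- the paper keeps the standard objective and tracks the additive offset $|V_2\setminus V_0|+1$.

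The most substantive gap concerns exact conflicts. You carry $(C,v)$ over verbatim and argue it is violated in $G^*$ iff it is violated in $G$ because the auxiliary edge is never in $C$. For \emph{exact} conflicts this is exactly backwards: if the dummy edge at $v$ is oriented into $v$ and the in-set among $E$ equals $C$, then the in-set in $G^*$ is $C\cup\{vv'\}\neq C$, so the exact conflict is \emph{not} triggered in $G^*$ even though it is violated in $G$; the reduced instance is too permissive. The paper repairs this by replacing each relevant odd-size conflict $(C,v)$ with $(C\cup\{vv'\},v)$ (and with $(C\cup\{vw\},v)$ in Step~2). This is also the true reason {\sc pco-dec} is excluded from the lemma --- augmenting several conflicts at the same vertex with the same dummy edge destroys disjointness --- not the reason you give. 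Your verbatim-transfer argument is fine for subset conflicts (since $C\subseteq S$ iff $C\subseteq S\cup\{vv'\}$ when $vv'\notin C$), which is why {\sc pco-sc} and {\sc pco-dsc} survive, but as written the exact-conflict case fails.
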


\begin{proof}
We first reduce the parity constrained orientation ({\sc pco}) problem to the even orientation problem ({\sc eo}), and then consider various additional constraints. Consider an instance $I_1$ of {\sc pco}, that is, a multigraph $G_1=(V_1,E_1)$ with a partial parity constraint $p_1:V_0\rightarrow \{0,1\}$, $V_0\subseteq V_1$.
We construct an instance $I_3$ of {\sc eo} in two steps.

\noindent {\bf Step~1:} We construct an instance $I_2$ of {\sc pco} by augmenting $G_1$ to a multigraph
$G_2=(V_2,E_2)$ with new edges and vertices such that all parity constraints are even. 
For {\em each} vertex $v\in V_0$ with odd constraint $p_1(v)=1$, add a new (dummy) vertex $v'\in V_2$ and a new edge $vv'\in E_2$, with $p_2(v) = 0$ and $p_2(v') = 0$. Let $V'$ be the set of dummy vertices. If $G_1$ has an orientation satisfying $t_1$ out of $|V_0|$ parity constraints, then $G_2$ has an orientation satisfying $t_1+|V'|$ out of $|V_0|+|V'|$ parity ({\em i.e.}, evenness) constraints. Indeed, just orienting each dummy edge away from the dummy vertex means every dummy vertex has indegree 0, and the indegree of all adjacent vertices changes parity from odd to even. Conversely, if $G_2$ has an orientation satisfying the maximum number of parity constraints, say $t_2$, then we can assume that all dummy edges are oriented away from the dummy vertices. After deleting all dummy edges and vertices, we obtain an orientation of $G_1$ satisfying $t_2-|V'|$ parity constraints.

\noindent {\bf Step~2.} Consider an instance $I_2$ of {\sc pco}: a multigraph $G_2=(V_2,E_2)$ with even parity constraints $p_2:V_0\rightarrow \{0\}$ for some $V_0\subseteq V_2$. We construct a new instance of {\sc pco} in which the parity of {\em every} vertex is constrained to be even. Construct $G_3=(V_3,E_3)$ from $G_2$ by adding one new (dummy) vertex $w$, and connecting every vertex $v\in V_2\setminus V_0$ to $w$. If $|E_3|$ is odd, add one additional vertex $w'$ connected to $w$ by a single edge. Set the parity constraint of every vertex in $V_3$ to {\em even}. If $G_2$ has an orientation satisfying $t_2$ out of $|V_0|$ parity constraints, then $G_3$ has an orientation satisfying $t_2+|V_2\setminus V_0|+1$ parity constraints, just by orienting each dummy edge to make the parity of each unconstrained vertex even. Conversely, if $G_3$ has an orientation satisfying $t_3$ parity constraints, then after deleting the dummy vertices and edges (and also removing the parity constraints from vertices in $V_2\setminus V_0$) we obtain an orientation of $G_2$ satisfying $t_3-|V_2\setminus V_0|-1$ parity constraints.

In {\sc pco-ec}, an instance $I$ includes a family $\mathcal{C}$ of exact conflict constraints. We modify $\mathcal{C}$ as well in two steps. In the first step, we replace every conflict $(C,v)$ where $v\in V_0$, $|C|$ is {\em odd} and $p_1(v)=1$, with a new conflict $(C\cup \{vv'\},v)$. In the second step, we replace every conflict $(C,v)$ where $|C|$ is {\em odd} and $v\in V_2\setminus V_0$ with a new conflict $(C\cup \{vw\},v)$. These modifications ensure that after removing the dummy edges and vertices, the set of incoming edges are not in conflict at any vertex.

In {\sc pco-sc} and {\sc pco-dsc}, an instance $I$ includes a family $\mathcal{C}$ of subset conflict constraints. When we augment $G$ with new (dummy) vertices and edges, we preserve all these constraints. Independent of the orientation of the dummy edges, the constraints are satisfied in all feasible orientations for $I_1$, $I_2$, and $I_3$.
\end{proof}

{\bf Remarks.}
With the above argument, every instance of {\sc pco-dec} can be reduced to an instance of {\sc eo-ec}, but the conflicts are no longer disjoint when we augment all conflicts at a vertex $v$ with a common dummy edge.


\subsection{Even orientations with disjoint exact conflict pairs\label{ssec:alg-2dec}}


Let $G=(V,E)$ be a connected multigraph, and let $\mathcal{C}\subseteq {E\choose 2}\times V$ be a family of pairwise disjoint exact conflict pairs. We wish to find an orientation for $G$ with a maximum number of even vertices such that whenever a vertex $v$ has indegree 2 from edges $e_1$ and $e_2$, then $(\{e_1, e_2\},v) \notin \mathcal{C}$.
We present a polynomial time algorithm that either constructs an optimal orientation or reports that none exists. Without loss of generality, assume that $G$ is connected.

Recall the definition of the {\bf line graph} $L(G)$. Given a multigraph $G=(V,E)$, the nodes of $L(G)$ correspond to $E$, and two nodes are adjacent iff the corresponding edges of $G$ are adjacent. For a multigraph $G=(V,E)$ and conflict pairs $\mathcal{C}\subseteq {E\choose 2}\times V$, we define the following subgraph of $L(G)$. Let $L'=L'(G,\mathcal{C})$ be the graph whose node set is $E$, and two nodes $e_1,e_2\in E$ are adjacent in $L'$ iff they have a common endpoint $v\in V$ {\em and} $(\{e_1,e_2\},v)\notin \mathcal{C}$. We show that an instance of the
optimization version of {\sc eo-2dec} for $G$ and $\mathcal{C}$ reduces to a maximum matching over $L'(G,\mathcal{C})$.

\begin{lemma}\label{lem:linegraph}
Let $G=(V,E)$ be a multigraph with disjoint exact conflict pairs $\mathcal{C}$.
There are $t$ vertices with odd indegree in a conflict-free orientation of $G$ that
maximizes the number of even vertices iff there are $t$ nodes uncovered in
a maximum matching of $L'=L'(G,\mathcal{C})$.
\end{lemma}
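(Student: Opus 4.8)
The plan is to establish the two inequalities $u^{\ast}\le t^{\ast}$ and $t^{\ast}\le u^{\ast}$, where $t^{\ast}$ denotes the number of odd‑indegree vertices in a conflict‑free orientation of $G$ maximizing the number of even vertices, and $u^{\ast}=|E|-2\mu$ denotes the number of nodes left uncovered by a maximum matching $\mu$ of $L'$; both quantities are well defined, and the lemma is exactly $t^{\ast}=u^{\ast}$. For $u^{\ast}\le t^{\ast}$, I would take an optimal conflict‑free orientation $O$ and, at each vertex $v$, consider the set $I_v$ of edges whose head is $v$, so that the sets $\{I_v\}_{v\in V}$ partition $E$ and $|I_v|=\mathrm{indeg}_O(v)$. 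At every $v$ I would pick a pairing of $I_v$ that uses no conflict pair at $v$: a perfect matching of $I_v$ when $\mathrm{indeg}_O(v)$ is even, and a near‑perfect matching (exactly one edge left over) when it is odd. The union of all these pairs is a matching $M$ of $L'$, since each chosen pair shares the vertex $v$ and is not a conflict there (hence is an edge of $L'$), and no edge of $G$ occurs in two pairs because it has a single head. The uncovered nodes of $M$ are precisely the leftover edges, one per odd vertex, so $M$ leaves exactly $t^{\ast}$ nodes uncovered, giving $u^{\ast}\le t^{\ast}$.

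The one point in that direction that needs care is the existence of the required pairing of $I_v$. Because conflicts are pairwise disjoint, the conflict pairs at $v$ form a matching on the edges incident to $v$, so it suffices to know that a complete graph on an even number $\ge 4$ of vertices minus a matching has a perfect matching, and a complete graph on an odd number $\ge 3$ of vertices minus a matching has a near‑perfect matching; these are elementary. The case $|I_v|=2$ is handled separately: conflict‑freeness of $O$ guarantees that $I_v$ is not a conflict pair at $v$, so the two edges may be paired. The cases $|I_v|\le 1$ are trivial.

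For $t^{\ast}\le u^{\ast}$, I would start from a maximum matching $M$ of $L'$ and let $U\subseteq E$ be its uncovered nodes. The key structural observation is that, viewed as a subgraph of $G$, $U$ has maximum degree at most $2$: if two edges of $U$ met at a vertex $v$ and did not form a conflict pair there, they would be an edge of $L'$ joining two unmatched nodes, contradicting maximality of $M$. Hence any two $U$‑edges sharing a vertex are the (unique) conflict pair at that vertex, and every component of $U$ is a path or a cycle (a pair of parallel edges being a $2$‑cycle). Now orient $G$ as follows: send both edges of each matched pair of $M$ into their common vertex, and orient each path or cycle component of $U$ consistently as a directed path or cycle. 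I would then verify that this orientation is conflict‑free with exactly $|U|=u^{\ast}$ odd vertices. For the parity count, the matched pairs contribute an even amount to every indegree, while along a directed path or cycle the $U$‑contribution to the indegree equals $1$ at every vertex except the source of each path; so the odd vertices are exactly the $|U|$ vertices at which this contribution is $1$. For conflict‑freeness, a vertex whose incoming set has size exactly $2$ is either covered by a single matched pair, which is a non‑conflict by the definition of $L'$, or has $U$‑indegree $2$, which never occurs in a directed path or cycle; in all other cases the incoming set has size $\ne 2$ and cannot equal a conflict pair. This yields a conflict‑free orientation with $u^{\ast}$ odd vertices, hence $t^{\ast}\le u^{\ast}$, and combining with the first part finishes the proof.

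I expect the main obstacle to be this second direction, and within it the realization that maximality of the matching is precisely what forces $U$ to decompose into paths and cycles whose degree‑$2$ vertices carry exactly the conflict pairs; once that structure is in hand, orienting the components as directed paths and cycles is the single choice that simultaneously achieves the right number of odd vertices and automatically prevents any incoming set from realizing a forbidden pair.
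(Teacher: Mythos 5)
Your proposal is correct and follows essentially the same route as the paper: from an optimal orientation you build a matching of $L'$ by (near-)perfectly pairing the incoming edges at each vertex while avoiding the (disjoint) conflict pairs, and from a maximum matching you observe that the uncovered edges form paths and cycles (any two sharing a vertex being that vertex's conflict pair), which you orient consistently to get exactly one odd vertex per uncovered edge. The only difference is cosmetic: you pair incoming edges via a complete graph minus a matching rather than the paper's two-class partition, which if anything is stated more cleanly.
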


\begin{proof}
First, suppose that a maximum matching $M$ of $L'$ covers all but $t$ nodes. We construct a conflict-free orientation for $G$. For every edge $(e_1,e_2)\in M$, direct both $e_1$ and $e_2$ towards one of their common endpoints in $G$. We obtain a partial orientation of $G$, where all indegrees are even, since pairs of edges are directed towards each vertex of $G$. Since adjacent but conflicting edges are not connected in $L'$, they are not matched in $M$, and thus there is no vertex in $G$ with indegree 2 where the two incoming edges are in conflict.

Now consider the set of unmatched nodes of $L'$, which is a set $E^*\subseteq E$ of edges in $G$ of size $|E^*|=t$. Out of any three edges incident to a common vertex, at least two can be matched, since the conflict pairs do not overlap. Hence each vertex $v\in V$ is incident to at most two edges in $E^*$; and if $v$ is incident to two edges in $E^*$, then those edges are in conflict. So the edges in $E^*$ form disjoint paths and circuits in $G$. We can orient the edges in $E^*$ into distinct vertices in $V$. We obtain an orientation of $G$ with exactly $t$ odd vertices.

Next suppose that in a conflict-free orientation of $G$ with the largest number of even vertices, there are exactly $t$ odd vertices. We construct a matching of $L'$. Consider a vertex $v\in V$. Partition the incoming edges of $v$ into two subsets whose size differ by at most one such that conflicting pairs are in different classes. This is possible, since the conflict pairs are disjoint, and so every edge participates in at most one conflict pair at $v$. Fix a maximum matching between the two classes arbitrarily. We have matched adjacent edges, but no conflicting pairs. If $v$ is even, then the matching is perfect, otherwise one edge remains unmatched. After repeating this for all vertices $v\in V$, we obtain a matching of $L'$ that covers all but $t$ edges in $E$.
\end{proof}

We use the following algorithm for constructing a desired even orientation. Given a multigraph $G$ and disjoint exact conflict pairs $\mathcal{C}$, construct graph $L'=L'(G,\mathcal{C})$, compute a maximum matching $M$ on $L'$, and convert it into an orientation of $G$. For a graph $G=(V,E)$ with $n$ vertices and $m$ edges, the line-graph $L'$ has $m$ nodes and $O(m^2)$ edges. The general max-flow algorithm used to find maximum matchings runs in time cubic in the number of nodes, or in $O(m^3)$ time. Since $L'$ is a unit-capacity graph, Dinic's blocking flow algorithm~\cite{AMO93} gives a runtime of $O(m^{2.5})$.

\vspace{-\baselineskip}

\subsection{Even orientations with disjoint exact conflicts}

We reduce {\sc pco-dec} with conflicts of size {\em at least} two to {\sc eo-2dec} in linear time. (Recall that {\sc pco-dec} has not been reduced to {\sc eo-dec} in Section~\ref{ssec:alg-2dec}). A key ingredient of the reduction is a ``switching network'' that can rearrange the orientations of $k$ edges of a conflict. This auxiliary network is defined as a graph $N_k$ with parity constraints and disjoint exact conflict pairs. It has $2k$ leaves: $k$ {\em input} leaves $a_1,\ldots ,a_k$ and $k$ {\em output} leaves $b_1,\ldots , b_k$. We draw $N_k$ in the plane such that the input leaves are on the left side, the output leaves are on the right side, and so it is convenient say that the orientation of each edge is either left-to-right (for short, {\em right}) or right-to-left ({\em left}). The network $N_k$ will have the following two properties:
\begin{itemize}\itemsep -1pt
\item[$\mathbf{P}_1$] If exactly $\ell$ input edges are oriented right, for some $0\leq \ell\leq k$, then exactly $\ell$ output edges are oriented right in {\em every} valid orientation of $N_k$.
\item[$\mathbf{P}_2$] If exactly $\ell$ input edges are oriented right, for some $0 < \ell < k$, then $b_1$ is oriented right and $b_2$ is oriented left in {\em some} valid orientation of $N_k$.
\end{itemize}

Properties $\mathbf{P}_1$ and $\mathbf{P_2}$ imply that outputs $b_1$ and $b_2$ represent all $k$ inputs for the purposes of exact conflicts. If all inputs are oriented right, then both $b_1$ and $b_2$ are oriented right; if no input is oriented right, then neither $b_1$ nor $b_2$ is oriented right. If some inputs are oriented right some are left, then there is a valid orientation where $b_1$ is oriented right and $b_2$ is left.

\begin{figure}
  \centering
 \includegraphics[width=0.9\columnwidth]{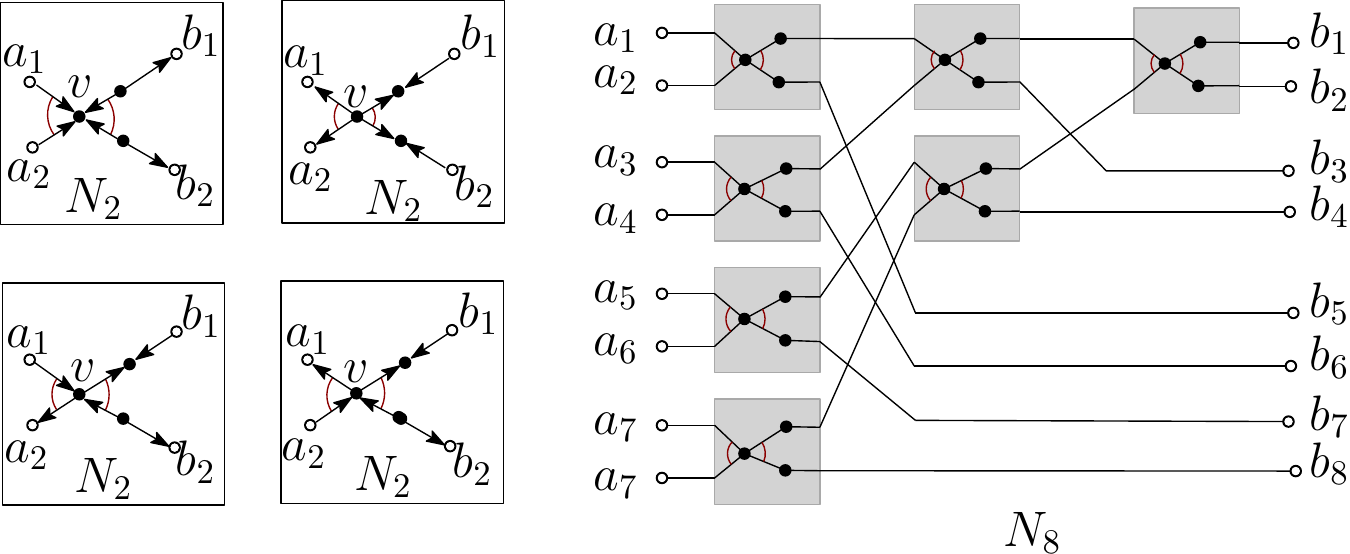}

  \caption{\small Left: $N_2$ with four possible orientations at $a_1$ and $a_2$. Right: $N_8$ is composed of 7 copies of $N_2$.\label{fig:switchnetowork}}
\end{figure}

For $k=2$, let $N_2$ be the graph shown in the left of Fig.~\ref{fig:switchnetowork} (arcs denote \emph{exact} conflict pairs). The leaves may have arbitrary indegrees, but every nonleaf vertex must have even indegree.

For every $k>2$, the graph $N_k$ is composed of multiple copies of $N_2$, similarly to a multi-stage switching network where the {\em switches} correspond to copies of $N_2$. Specifically, $N_k$ consists of $\lceil \log k\rceil$ stages. Stage $i=1,\ldots , \lceil \log k\rceil$ consists of $\lceil k/2^i\rceil$ copies of $N_2$. For each copy of $N_2$ at stage $i=1,2,\ldots \lceil \log k\rceil-1$, one output leaf is identified with an input leaf in the next stage, and the other output leaf becomes an output leaf of $N_k$. Refer to the right of Fig.~\ref{fig:switchnetowork} for an example with $k=8$. Note that graph $N_k$ has at most $6k$ nonleaf vertices.

\begin{lemma}
For every $k\in \mathbb{N}$, $k\geq 2$, graph $N_k$ satisfies both $\mathbf{P}_1$ and $\mathbf{P}_2$.
\end{lemma}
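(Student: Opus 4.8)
The plan is to prove the lemma by induction on $k$, using the two base-case-style verifications for $N_2$ as the anchor and then propagating $\mathbf{P}_1$ and $\mathbf{P}_2$ through the $\lceil\log k\rceil$ stages of the multi-stage construction. The core of the argument is therefore two pieces: (i) establish that $N_2$ itself satisfies a slightly refined version of $\mathbf{P}_1$ and $\mathbf{P}_2$ by a finite case check over the four input configurations shown in Fig.~\ref{fig:switchnetowork}; and (ii) show that composing $N_2$-``switches'' in the prescribed tree/staged pattern preserves these properties.

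\textbf{Step 1: the switch $N_2$.} First I would verify directly from the gadget that $N_2$ has the following behaviour: in every valid orientation (all nonleaf vertices even, no exact conflict pair equal to an indegree set), the number of output edges oriented right equals the number of input edges oriented right; and moreover, when exactly one of the two inputs is oriented right, there is a valid orientation with $b_1$ right and $b_2$ left, and also one with $b_1$ left and $b_2$ right. This is a finite check: there are four input patterns ($00, 01, 10, 11$ in the obvious notation), and for each one I would exhibit the forced or available completions, reading off the parity constraints at the (at most six) internal vertices and checking the arcs. The cases $00$ and $11$ should force both outputs to agree with the inputs; the mixed cases $01$ and $10$ should leave exactly the freedom claimed by $\mathbf{P}_2$. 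Since the paper already asserts ``$N_2$ is the graph shown'' with the four orientations depicted, this step is essentially bookkeeping against the figure.

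\textbf{Step 2: induction on stages.} Assume $N_{k'}$ satisfies $\mathbf{P}_1$ and $\mathbf{P}_2$ for all $2 \le k' < k$. Write $N_k$ as its first stage --- a layer of $\lceil k/2\rceil$ copies of $N_2$ whose $k$ input leaves are the inputs of $N_k$ --- feeding $\lceil k/2\rceil$ of its outputs into the smaller network formed by stages $2,\dots,\lceil\log k\rceil$, which is (an instance of the same construction for) $N_{\lceil k/2\rceil}$, while the remaining $\lfloor k/2\rfloor$ first-stage outputs are passed straight through to become outputs of $N_k$. For $\mathbf{P}_1$: if $\ell$ inputs of $N_k$ are right, then by the $N_2$-property each first-stage switch outputs as many right edges as it receives, so exactly $\ell$ edges leave stage~1 to the right; of these, the ones routed into the sub-network enter it and, by the inductive $\mathbf{P}_1$, emerge in the same count, while the pass-through ones are unchanged; summing, exactly $\ell$ outputs of $N_k$ are right. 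For $\mathbf{P}_2$, with $0<\ell<k$: I need a valid orientation with $b_1$ right and $b_2$ left. I would locate one first-stage switch whose two inputs are mixed, or --- if every switch sees $00$ or $11$ --- use that there must be both a $11$ switch and a $00$ switch (since $0<\ell<k$), route a right and a left line respectively into the recursion so the recursive sub-network sees a mixed input strictly between $0$ and its arity, apply the inductive $\mathbf{P}_2$ inside, and use the $N_2$-freedom (the mixed-input case of Step~1) to steer the designated outputs $b_1,b_2$ of $N_k$ — which are specific first-stage/recursive outputs — to right and left respectively. The indexing has to be set up so that $b_1$ and $b_2$ of $N_k$ correspond to the $b_1,b_2$ of an appropriate switch or of the recursive network, so the freedom there transfers.

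\textbf{The main obstacle} will be the $\mathbf{P}_2$ half of the inductive step: unlike $\mathbf{P}_1$, which is just counting, $\mathbf{P}_2$ requires carefully threading ``mixedness'' through the network so that (a) some switch or the recursive sub-instance actually receives a strictly-mixed input to which the inductive hypothesis or Step~1 applies, and (b) that freedom can be aimed at the two distinguished global outputs $b_1,b_2$ rather than dissipating into irrelevant outputs. This is where the precise wiring of the construction (which output leaf of each stage-$i$ switch is kept versus forwarded, and which two global outputs are called $b_1,b_2$) must be pinned down; the handling of the rounding $\lceil k/2\rceil$ vs.\ $\lfloor k/2\rfloor$ and the all-$00$/all-$11$ degenerate switch patterns are the fiddly sub-cases. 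Everything else --- the $N_2$ case analysis and the $\mathbf{P}_1$ count --- is routine.
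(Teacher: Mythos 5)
Your proposal is correct and shares the paper's overall architecture (a finite case check on the switch $N_2$, then propagation through the staged composition; $\mathbf{P}_1$ is pure counting in both), but your mechanism for the $\mathbf{P}_2$ step is genuinely different. The paper starts from an \emph{arbitrary} valid orientation and, if the final switch's two inputs agree, performs local surgery: it picks a left-oriented input $a_i$, follows the unique path of identified leaves from $a_i$'s switch to the rightmost switch, and re-chooses the orientation inside each switch on that path so that every inter-stage edge along it points left, which forces the final switch into the mixed case. You instead build an orientation from scratch by induction on the recursive structure, maintaining the invariant that the sub-network $N_{\lceil k/2\rceil}$ formed by stages $2,\dots,\lceil\log k\rceil$ receives a strictly mixed input; your case analysis (some first-stage switch mixed, versus all switches seeing $00$ or $11$, in which case $0<\ell<k$ forces at least one of each) does close, given the $N_2$ freedom that in the mixed case either output may be the right-oriented one --- a fact that holds because the middle vertex then has indegree 2 with one incoming edge on the left and a free choice of which right-side edge completes it. The paper's surgery argument is shorter and avoids the rounding/degenerate-switch bookkeeping you flag as the fiddly part; your induction is more self-contained in that it never has to assume a valid orientation already exists before improving it, and it makes explicit exactly where the ``mixedness'' must land for the two distinguished outputs to be steerable. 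Both are sound.
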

\begin{proof}
For $k=2$, the two disjoint conflict pairs at $v$ ensure that if both input edges are oriented right, then the indegree of $v$ is 4; if neither input edge is oriented right, then the indegree of $v$ is 0. If exactly one input edge is oriented right, then the indegree of $v$ is 2, and the second incoming edge may be any one of the two edges on the right side of $v$. It is now easy to verify that properties $\mathbf{P}_1$ and $\mathbf{P}_2$ hold.

For $k>2$, property $\mathbf{P_1}$ follows from the fact that $N_2$ has this property and we identified input edges with output edges in adjacent copies of $N_2$. For $\mathbf{P}_2$, assume that not all input edges have the same orientation. Consider an arbitrary valid orientation of $N_k$. If the two input edges of the rightmost copy of $N_2$ have different orientations, then $\mathbf{P}_2$ follows. Suppose that these two edges have the same orientation, say, both are oriented right. We show that $N_k$ has another valid orientation where these two edges have different orientations. Let $a_i$ be an input edge of $N_k$ oriented {\em left}. Note that $N_k$ contains a path from $a_i$ to the rightmost copy of $N_2$. In all copies of $N_2$ along this path, there is a valid orientation such that the edges between consecutive copies of $N_2$ are oriented left. Combining these orientations, we obtain a valid orientation where $b_1$ is oriented left and $b_2$ right, as required.
\end{proof}

Let $I$ be an instance of {\sc pco-dec} with conflicts of size {\em at least} 2. That is, $I$ consists of a multigraph graph $G=(V,E)$, a family of disjoint exact conflicts $\mathcal{C}\subseteq 2^E\times V$ each with at least two edges, and parity constraints $p: V_0\rightarrow \{0,1\}$. We may assume that at every vertex $v\in V_0$, the number of edges in each conflict set is $p(v)$ modulo~2, since all other conflict constraints are automatically satisfied. We create an instance $I'$ of {\sc eo-2dec}, that is, a multigraph $G' = (V',E')$ with disjoint conflict \emph{pairs} $\mathcal{C}'\subseteq {E'\choose 2}\times V'$ such that $G'$ has a conflict-free even orientation iff $G$ has a valid orientation.

\begin{figure}
  \centering
 \includegraphics[width=0.7\columnwidth]{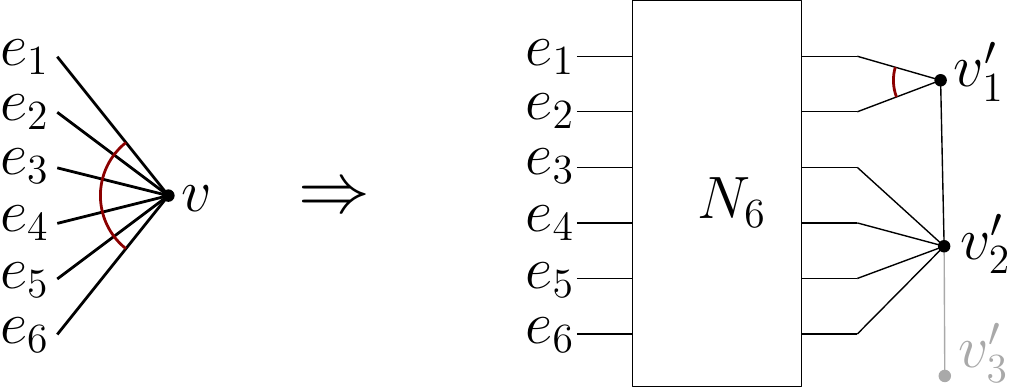}

  \caption{\small An exact conflict $(\{e_1,\ldots ,e_6\},v)$ is replaced by a network $N_6$ with the first two outputs identified with $v_1$ and all remaining outputs identified with $v_2$. \label{fig:reduction-dec}}
\end{figure}

For every vertex $v\in V$, we create a path $\pi'(v)$ in $G'$ as follows: If $p(v)=0$, then $\pi'(v)= (v_1',v_2',v_3')$ with three nodes; if $p(v)=1$ or $v\not\in V_0$, then $\pi'(v)= (v_1',v_2')$ with two nodes.
In order to balance the parity of unrestricted nodes $v\in V\setminus V_0$, we create one common auxiliary vertex $u_0'\in V$, and connect it to $v_2'$ for every $v\in V\setminus V_0$. If $|V\setminus V_0|$ is odd, we also add a dummy vertex $u_1'$ and a dummy edge $u_0'u_1'$ ($u_0'u_1'$ is oriented into $u_0'$ in any even orientation of $G'$).

For each edge $e\in E$, we create an edge $e'\in E'$ as follows. If $e$ is incident to $v$ and it is not part of any conflict at $v$, then let $e'$ be incident to $v_1$. For each conflict {\em pair} $(\{e_1,e_2\},v)$, let the corresponding edges, $e_1'$ and $e_2'$, be incident to $v_1$, and let $(\{e_1',e_2'\},v_1)\in \mathcal{C}'$ be an exact conflict pair. Finally, for each conflict $(\{e_1,\ldots ,e_k\},v)\in \mathcal{C}$, of size $k\geq 3$, we create a copy of the network $N_k$: identify the edges $e_1,\ldots e_k$ with the input leaves of $N_k$, identify output leaves $b_1$ and $b_2$ with $v_1$ forming an exact conflict pair at $v_1$, and identify the remaining $k-2$ output leaves with $v_2$. Fig.~\ref{fig:reduction-dec} shows an example for $k=6$. This completes the specification of the new instance $I'$ of {\sc eo-2dec}.

\begin{lemma}\label{lem:disjoint-exact}
Instance $I$ of {\sc pco-dec} with $G$,  $\mathcal{C}$, and parity constraints $p$ has a conflict-free orientation iff instance $I'$ of {\sc eo-2dec} with $G'$ and $\mathcal{C}'$ has a conflict-free even orientation.
\end{lemma}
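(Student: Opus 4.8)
The plan is to establish a bijection-style correspondence between valid orientations of $G$ (satisfying the parity constraints $p$ and avoiding all conflicts in $\mathcal{C}$) and conflict-free even orientations of $G'$, by tracking what happens to each edge and each vertex gadget separately. The natural direction to handle first is $(\Rightarrow)$: given a valid orientation of $G$, I would orient each $e'\in E'$ in the ``same'' direction as its counterpart $e\in E$ relative to the endpoint vertices, then extend this to the interior of each switching network $N_k$ and each path $\pi'(v)$. By property $\mathbf{P}_1$, if $\ell$ of the edges $e_1,\dots,e_k$ in a conflict at $v$ are oriented toward $v$, then exactly $\ell$ of the outputs of $N_k$ are oriented toward $v_1/v_2$; I then need to check the indegree parity at $v_1'$, $v_2'$, $v_3'$ and at $u_0'$. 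The key bookkeeping point is that the total indegree contribution at the path $\pi'(v)$ from ``$E$-edges'' equals the $G$-indegree of $v$, which is $p(v)$ mod $2$ when $v\in V_0$; the path edges $v_1'v_2'$ (and $v_2'v_3'$ when $p(v)=0$) can then be oriented to make every node of $\pi'(v)$ even, and the edge $v_2'u_0'$ absorbs the parity of unrestricted vertices, with $u_0'u_1'$ (if present) fixing $u_0'$ itself. When $p(v)=1$ the conflict at $v$ has odd size by the ``without loss of generality'' assumption, and an odd number $\ell$ of outputs pointing in means $v_1'$ sees odd indegree from networks/edges, which the single path edge corrects — I should verify this parity arithmetic carefully since it is the crux of why the two-node vs.\ three-node path distinction works.

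For $(\Leftarrow)$, given a conflict-free even orientation of $G'$, I would read off an orientation of $G$ by orienting each $e\in E$ according to $e'$, and argue it is valid. Parity: the evenness at all of $v_1',v_2',v_3'$ (together with $\mathbf{P}_1$ forcing the number of network-outputs into $v_1$ to equal the number of conflict-edges of $G$ oriented into $v$) forces the $G$-indegree of $v$ to have the right parity $p(v)$; the path-edge orientations are determined (up to the freedom that does not affect $v$'s $G$-indegree) by the even-indegree requirement. Conflict-freeness: if some conflict $(\{e_1,\dots,e_k\},v)\in\mathcal{C}$ were violated, i.e.\ all $e_1,\dots,e_k$ oriented into $v$, then by $\mathbf{P}_1$ all $k$ outputs of $N_k$ point into $v_1/v_2$, so in particular $b_1,b_2$ both point into $v_1$, making $(\{b_1,b_2\},v_1)\in\mathcal{C}'$ a violated exact conflict pair — contradiction. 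For $k=2$ conflict pairs this is immediate since they were copied directly as $(\{e_1',e_2'\},v_1)$. I also need the converse use of conflicts: that having exactly $k$ outputs in is the \emph{only} way to get $b_1,b_2$ both in — but that is exactly $\mathbf{P}_1$ combined with the $N_2$ analysis ($\mathbf{P}_2$ is not needed for this lemma, only for a later step).

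The main obstacle I anticipate is the parity arithmetic at the vertex gadgets: one must check in all four combinations of ($v\in V_0$ vs.\ not) and ($p(v)=0$ vs.\ $p(v)=1$) that the number of path-nodes was chosen so that (a) every node of $\pi'(v)$ and $u_0'$ can be made even, and (b) evenness of $\pi'(v)$ pins down the $G$-indegree parity of $v$ to exactly $p(v)$. This hinges on the ``without loss of generality'' reduction that each conflict at $v$ has size $\equiv p(v)\pmod 2$, which makes the number of network-outputs entering $v_1$ have a controlled parity; I would state this as a short preliminary observation and then do the four-case check, keeping each case to one or two lines. A secondary, purely mechanical point is confirming that identifying input leaves of $N_k$ with the $E$-edges $e_i$ does not create unintended adjacencies or conflicts at those leaves — which holds because leaves of $N_k$ carry no parity constraint and no conflicts, so the only constraints on $e_i'$ come from its $G$-endpoint, exactly as for a non-conflict edge.
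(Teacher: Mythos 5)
Your overall architecture matches the paper's: copy the orientation across the shared edges $e\leftrightarrow e'$, use $\mathbf{P}_1$ to relate the number of conflict edges entering $v$ to the number of network outputs entering $v_1',v_2'$, and do the parity bookkeeping along $\pi'(v)$. That part of the plan is sound. However, there are two genuine gaps, both traceable to treating the \emph{exact} conflicts as if they were \emph{subset} conflicts.

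First, in the forward direction you never argue that $o'$ is conflict-free, and you explicitly assert that $\mathbf{P}_2$ is not needed for this lemma. It is needed, precisely here. $\mathbf{P}_1$ only fixes the \emph{number} of outputs of $N_k$ oriented right; it does not control \emph{which} ones. For example, if $v$ carries a single conflict of size $k=8$ and exactly $\ell=2$ of its edges enter $v$, there are valid interior orientations of $N_8$ in which the two right-going outputs are exactly $b_1$ and $b_2$; then the incoming set at $v_1'$ is exactly $\{b_1,b_2\}$ (evenness forces $v_1'v_2'$ away from $v_1'$) and the new exact conflict pair is violated even though the original conflict at $v$ was not. The paper avoids this by invoking $\mathbf{P}_2$ to \emph{choose} an interior orientation with $b_1$ right and $b_2$ left whenever $0<\ell<k$. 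Relatedly, your claim that ``having exactly $k$ outputs in is the only way to get $b_1,b_2$ both in'' is false for $k>2$ (route two right-going inputs through different halves of the switching network and they can meet at the last $N_2$); fortunately that claim is not actually needed anywhere.

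Second, in the backward direction you define a violated conflict as ``all $e_1,\dots,e_k$ oriented into $v$'' and conclude that $b_1,b_2$ both entering $v_1'$ already violates $(\{b_1,b_2\},v_1')$. Both constraints are \emph{exact}: they are violated only when the incoming edge set \emph{equals} the conflict set. Indeed, it is perfectly legal for all of $e_1,\dots,e_k$ plus one further edge to enter $v$, in which case $b_1,b_2$ and that further edge all enter $v_1'$ and no conflict is violated on either side; your argument would wrongly derive a contradiction there. The correct contrapositive (as in the paper) is: if the incoming set at $v$ \emph{equals} the conflict set, then every other edge at $v$ points away, so by $\mathbf{P}_1$ (applied with $\ell=0$ to the other networks) every other $E$-derived edge at $v_1'$ points away, and evenness forces $v_1'v_2'$ away from $v_1'$; hence the incoming set at $v_1'$ is \emph{exactly} $\{b_1,b_2\}$, contradicting conflict-freeness of $o'$. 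Your parity analysis is essentially the paper's (which in fact leaves the backward parity check implicit), but the conflict-freeness arguments in both directions need to be redone with the exact-equality semantics in mind.
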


\begin{proof}
Assume $G$ has a conflict-free parity constrained orientation $o$. We construct a conflict-free even orientation $o'$ for $G'$. Every edge $e\in E$ corresponds to an edge $e'\in E'$. We set the orientation of $e'$ to be the same as $e$. It remains to specify the orientation of auxiliary structures. For every vertex $v\in V$, we orient edge $v_1'v_2'\in E'$ to make the indegree of $v_1'$ even; and then the possible edges $v_2' v_3'$ or $v_2' u_0'$ are oriented to make the indegree of $v_2'$ even. Since $o'$ satisfies the parity constraints at every vertex $v\in V$,
and we added an dummy edge $v_3'v_2'$ oriented into $v_2$, it follows that the indegrees of all vertices $v_1'$, $v_2'$, and (if exists) $v_3'$ are even. Next, we choose the orientations in the networks $N_k$. For a conflict set $(\{e_1,\ldots , e_k\},v)\in \mathcal{C}$, a network $N_k$ forwards two edges to a conflict pair at $v_1$ and the remaining $k-2$ edges to $v_3$. By properties $\mathbf{P}_1$ and $\mathbf{P}_2$, the conflict pair has 0 (resp., 1 or 2) edges oriented into $v_1$ iff $\{e_1,\ldots , e_k\}$ has 0 (resp., $0 < \ell<k$ or $k$) edges oriented into $v$. This implies that if $o$ has no conflict at $v\in V$, then $o'$ has no conflict at $v_1'\in V'$.

Assume now that $G'$ has a conflict-free even orientation $o'$. We construct a conflict-free parity-constrained orientation $o$ on $G$. Recall that every edge $e\in E$ corresponds to an edge $e'\in E'$. Let each $e$ take the same orientation in $o$ as $e'$ has in $o'$. Suppose that the set of incoming edges at a vertex $v\in V$ equals a conflict set $\{e_1,\ldots , e_k\}$. Then the set of incoming edges of $v_1'$ is the conflict pair $\{e_1',e_2'\}$, that is, $o'$ is not a conflict-free orientation. It follows that $o$ is a conflict-free orientation.
\end{proof}


\subsection{Even orientations with disjoint subset conflicts}


We reduce {\sc eo-dsc} to {\sc eo-2dec} in linear time. Let $I$ be an instance of {\sc eo-dsc}, that is, a multigraph $G=(V,E)$ with disjoint subset conflicts $\mathcal{C}$ of various sizes. We construct a new multigraph $G' = (V',E')$ with disjoint exact conflict {\em pairs} such that $G$ has a conflict-free even orientation iff $G'$ does.

\begin{figure}
  \centering
 \includegraphics[width=0.9\columnwidth]{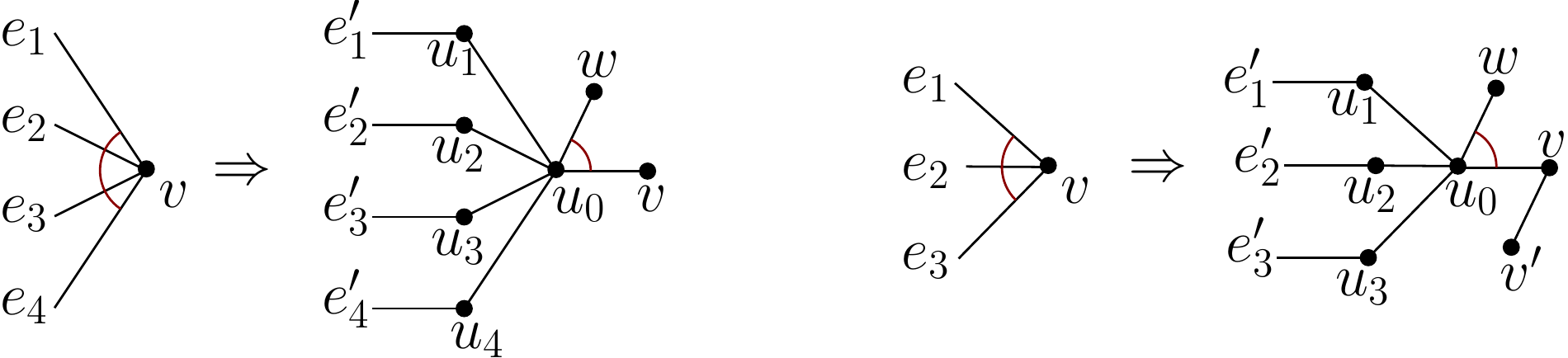}

  \caption{\small Left: Modification for a subset conflict $(\{e_1,e_2,e_3,e_4\},v)$ of even size.
  Right: Modification for a subset conflict $(\{e_1,e_2,e_3\},v)$ of odd size.
  \label{fig:reduction-dsc}}
\end{figure}

The graph $G'$ is constructed by modifying $G=(V,E)$. We make some local modifications for each subset conflict $(\{e_1,\ldots ,e_k\},v)\in \mathcal{C}$.
If $k$ is even, then replace the edges $e_1,\ldots ,e_k$ with the configuration shown in Fig.~\ref{fig:reduction-dsc}(left) with $k+2$ new vertices $u_0,u_1,\ldots , u_k,w$ and one new exact conflict pair $(\{u_0v,u_0w\},u_0)\in \mathcal{C}'$. If $k$ is even, then replace the edges $e_1,\ldots ,e_k$ with the configuration shown in Fig.~\ref{fig:reduction-dsc}(right) with $k+3$ new vertices  $u_0,u_1,\ldots , u_k,v',w$ and one new exact conflict pair $(\{u_0v,u_0w\},u_0)\in \mathcal{C}'$. By construction,
the new exact conflict pairs in $\mathcal{C}'$ are pairwise disjoint.

\begin{lemma}\label{lem:subset}
Instance $I$ of {\sc eo-dsc} with $G$ and subset conflict $\mathcal{C}$ has a conflict-free even orientation iff instance $I'$ of {\sc eo-2dec} with $G'$ and $\mathcal{C}'$ has a conflict-free even orientation.
\end{lemma}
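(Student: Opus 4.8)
The plan is to verify that the local gadget substitution shown in Figure~\ref{fig:reduction-dsc} faithfully encodes a single subset conflict, and then argue that doing this simultaneously at all conflicts preserves equivalence because the conflicts are pairwise disjoint (so the gadgets touch only at the shared vertex $v$, and only through the original edges $e_1,\dots,e_k$, whose orientations are the variables we care about). The core of the argument is therefore a purely local statement about one gadget: in $G'$, after the substitution, the constraint ``at least one of $e_1,\dots,e_k$ is oriented away from $v$'' must hold in every conflict-free even orientation, and conversely any orientation of the original edges satisfying that constraint extends to a valid even orientation of the gadget. I would state this as the single claim to prove and then lift it to the whole graph.

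First I would treat the even-$k$ case. The intended reading of the gadget is that $u_1,\dots,u_k$ each sit on the edge $e_i$ (so $e_i$ is split into two edges meeting at $u_i$, forcing the two halves to have the same ``through'' direction since $u_i$ has degree $2$ and must be even), the $u_i$ all feed into $u_0$, and $u_0$ is additionally joined to $v$ and to a new vertex $w$, with $(\{u_0v,u_0w\},u_0)$ an exact conflict pair; $w$ absorbs the leftover parity. The key computation: $u_0$ has even indegree, and the exact-conflict pair at $u_0$ forbids the situation where exactly the two edges $u_0v,u_0w$ point into $u_0$. Counting how many of the ``$e_i$-edges'' point toward $u_0$ versus away, together with the parity of $u_0$, pins down whether $u_0v$ can be oriented from $u_0$ into $v$; and one checks that $u_0v$ points into $v$ is forced exactly when all of $e_1,\dots,e_k$ point into $v$. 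Because the only net contribution of the gadget to $v$'s indegree is via the single edge $u_0v$ (replacing the $k$ edges $e_1,\dots,e_k$), I must also check that this substitution preserves $v$'s indegree parity and indegree value in the relevant sense — since $k$ is even, $u_0v$ carries the same parity information as $e_1+\cdots+e_k$ did, and the ``all into $v$'' case is precisely the forbidden one, which is what makes the exact pair at $u_0$ simulate the subset conflict at $v$. The odd-$k$ case is the same idea with an extra vertex $v'$ and edge to fix the parity mismatch (note the paper's typo: the second case should read ``if $k$ is odd''), so that again the gadget contributes a single edge to $v$ of the correct parity.

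Then I would assemble the global equivalence. Given a conflict-free even orientation of $G$: keep each $e_i$'s direction, orient each half-edge at $u_i$ consistently, and for each gadget orient $u_0v$, $u_0w$ (and the odd-case extras) using the local extension from the claim; disjointness guarantees these choices at different gadgets do not interfere, and every original non-gadget vertex keeps its indegree, while $v$'s indegree has the same parity (even) by the parity-matching check. Conversely, given a conflict-free even orientation of $G'$, read off the $e_i$ directions; the local claim says no gadget forces all of a conflict set into $v$, so no subset conflict at $v$ is violated, and $v$ (and all other vertices of $G$) still have even indegree. The main obstacle I anticipate is purely bookkeeping: getting the parity arithmetic at $u_0$ exactly right in both the even and odd cases, and making sure the auxiliary vertex $w$ (and $v'$) has a valid even indegree in every case so that no spurious obstruction is introduced — i.e., checking the gadget is genuinely ``universal'' for the $2^k-1$ allowed input patterns and blocks exactly the one forbidden pattern. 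Once that single gadget lemma is nailed down, the disjointness hypothesis makes the global statement essentially immediate.
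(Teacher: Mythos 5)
Your proposal is correct and follows essentially the same route as the paper: a local analysis of the gadget (the degree-2 vertices $u_i$ force each $e_i$'s orientation through to $u_0$, the pendant $w$ forces $u_0w$ into $u_0$, and the even-indegree constraint at $u_0$ together with the exact pair $(\{u_0v,u_0w\},u_0)$ blocks exactly the all-into-$v$ pattern while $u_0v$ reproduces the conflict set's indegree-parity contribution at $v$), followed by a global assembly using disjointness of the conflicts. One small sign slip: when all of $e_1,\dots,e_k$ point into $v$, the parity at $u_0$ forces $u_0v$ to point \emph{into $u_0$} (away from $v$), which is precisely what makes the incoming set at $u_0$ equal the forbidden pair $\{u_0v,u_0w\}$ --- in general $u_0v$ points into $v$ iff an odd number of the $e_i$ do --- so your surrounding statements (`the all-into-$v$ case is the forbidden one', `same parity information for even $k$') are right, but that one intermediate sentence should be corrected.
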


\begin{proof}
Assume $G$ has a conflict-free even orientation $o$. We construct a conflict-free even orientation $o'$ for $G'$.  The common edges of $G$ and $G'$ should have the same orientation as in $o$. For each conflict $(\{e_1,\ldots , e_k\},v)\in \mathcal{C}$, let the orientation of the edges $e_1',\ldots ,e_k'\in E'$ be the same as $e_1,\ldots , e_k\in E$, respectively. Since $o$ is conflict-free, some edge $e_i\in E$, $1\leq i\leq k$, is oriented away from $v$. The corresponding edge $e_i'\in E'$ is oriented away from $u_i$, and so $u_0u_i$ is oriented into $u_0$.
Therefore, $\{u_0v,u_0w\}$ cannot be the set of edges oriented into $u_0$. Since the indegrees of $u_0,u_1,\ldots , u_k$ are even (0 or 2), they uniquely determine the orientation of {\em all} edges incident to $u_0$. Note also that the edge $wu_0$ is always oriented into $v$ because the indegree of $w$ must be 0. It follows that $u_0v$ is oriented into $v$ in $o'$ iff an odd number of edges in $\{e_1,\ldots , e_k\}$ are oriented into $v$ in $o$. That is, the contribution of a conflict to the parity of $v$ is the same as the contribution of the edge $u_0v$ in $o'$. Overall, if $o$ is an even orientation, then $o'$ is even as well.

Assume now that $G'$ has a conflict-free even orientation $o'$. We construct a conflict-free even orientation $o$ for $G$. The common edges of $G$ and $G'$ should have the same orientation as in $o'$. For each conflict $(\{e_1,\ldots , e_k\},v)\in \mathcal{C}$, let the orientation of the edges $e_1,\ldots ,e_k\in E$ be the same as $e_1',\ldots , e_k'\in E'$, respectively. Since $o'$ is conflict-free and the indegree of $u_0$ is even, some edge $u_0u_i$, $1\leq i\leq k$, is oriented into $u_0$. This means that $e_i'\in E'$ is oriented away from $u_i$, and the corresponding edge $e_i\in E$ is not oriented into $v$. Hence, not all edges in $\{e_1,\ldots , e_k\}$ are orieted into $v$. All vertices of $G$ preserve their parity, hence $o$ is even as well.
\end{proof}




\section{Conclusion\label{sec:con}}


We have shown that the parity constrained orientation problem is NP-hard in the presence of exact or subset conflicts, and in fact already in the presence of conflict pairs. On the other hand, the problems are in P for disjoint conflict pairs. It remains an open problem to  determine the status of {\sc PCO-DEC} if all conflicts have one or two edges; while subset conflict sets with one edge are trivial, exact conflict sets with one edge are not, and our reductions only apply to exact conflicts with two or more edges.  

\end{document}